\newtheorem{thm}{Theorem}
\newtheorem{lem}[thm]{Lemma}
\newtheorem{definition}{Definition}
\date{}
\title{Small Strong Epsilon Nets}
\author{Pradeesha Ashok \thanks{Department of Computer Science and Automation,
				Indian Institute of Science, Bangalore, India.
				Email :\texttt{pradeesha@csa.iisc.ernet.in}}
\and Umair Azmi\thanks{Department of Computer Science and Automation,
				Indian Institute of Science, Bangalore, India.
				Email :\texttt{umair@csa.iisc.ernet.in}}
\and Sathish Govindarajan\thanks{Department of Computer Science and Automation,
				Indian Institute of Science, Bangalore, India.
				Email :\texttt{gsat@csa.iisc.ernet.in}}}
\begin{document}
\maketitle
%
%
\begin{abstract}
Let $P$ be a set of $n$ points in $\mathbb{R}^d$. A point $x$ is said to be a \emph{centerpoint} of $P$ if $x$ is contained in every convex object that contains more than $dn\over d+1$ points of $P$. We call a point $x$ a \emph{strong centerpoint} for a family of objects $\mathcal{C}$ if $x \in P$ is contained in every object $C \in \mathcal{C}$ that contains more than a constant fraction of points of $P$. A strong centerpoint does not exist even for halfspaces in $\mathbb{R}^2$. We prove that a strong centerpoint exists for axis-parallel boxes in $\mathbb{R}^d$ and give exact bounds. We then extend this to small strong $\epsilon$-nets in the plane and prove upper and lower bounds for $\epsilon_i^\mathcal{S}$ where $\mathcal{S}$ is the family of axis-parallel rectangles, halfspaces and disks. Here $\epsilon_i^\mathcal{S}$ represents the smallest real number in $[0,1]$ such that there exists an $\epsilon_i^\mathcal{S}$-net of size $i$ with respect to $\mathcal{S}$. 
\end{abstract}

\section{Introduction}
Let $P$ be a set of $n$ points in $\mathbb{R}^d$. A point $x \in \mathbb{R}^d$ is said to be a \emph{centerpoint} of $P$ if any halfspace that contains $x$ contains at least $n\over{d+1}$ points of $P$. Equivalently, $x$ is a centerpoint if and only if $x$ is contained in every convex object that contains more than $\frac{d}{d+1} n$ points of $P$. It has been proved that a centerpoint exists for any point set $P$ and the constant $\frac{d}{d+1}$ is tight~\cite{Rad46}. The computational aspects related to centerpoint have been studied in ~\cite{CEM93,JM94,MS10}. The notion of centerpoint has found many applications in statistics, combinatorial geometry, geometric algorithms etc~\cite{MTT93, MTT97, Yao83}. 

The centerpoint question has also been studied for certain special classes of convex objects. \cite{AAH09} shows exact constants on centerpoints for halfspaces, axis-parallel rectangles and disks in $\mathbb{R}^2$.

By the definition of centerpoint, $x$ need not be a point in $P$.  A natural question to ask is the following: Does there exist a \emph{strong centerpoint} i.e., the centerpoint must belong to $P$. In other words, for all point sets $P$, does there exist a point $p\in P$ such that $p$ is contained in every convex object that contains more than a constant fraction of points of $P$? It can be clearly seen that a strong centerpoint does not exist for convex objects by considering a point set with points in convex position. By the same example, a strong centerpoint does not exist even for objects like disks and halfspaces. The notion of strong centerpoints has been studied with respect to wedges in \cite{MPS09} where it was shown that there always exists a point $p \in P$ such that any ${3\pi\over2}$-wedge anchored at $p$ contains at least ${n\over d+1}$ points. We show that a strong centerpoint exists for axis-parallel boxes in $\mathbb{R}^d$ i.e. for any point set $P$ with $n$ points, there exists a point $x \in P$ such that $x$ is contained in every axis-parallel box that contains more than $\frac{2d-1}{2d}n$ points of $P$. A natural extension of strong centerpoint will be to allow more number of points and see what the bounds are. This question is related to a well studied area called $\epsilon$-nets. First we shall define $\epsilon$-nets.
\begin{definition}
Let $P$ be a set of $n$ points in $\mathbb{R}^d$ and $\mathcal{R}$ be a family of geometric objects. $N\subset P$ is called a (strong) $\epsilon$-net of $P$ with respect to $\mathcal{R}$ if $N\cap R \ne \emptyset$ for all subsets $R \in \mathcal{R}$ that has more than $\epsilon n$ points of $P$, that is, $|R\cap P| > \epsilon n$. Moreover, $N$ is called a weak $\epsilon$-net if $N\subset \mathbb{R}^d$, i.e. $N$ need not be a subset of $P$. 
\end{definition}
The concept of $\epsilon$-nets was introduced by Haussler and Welzl~\cite{HW87} and has found many applications in computational geometry, approximation algorithms, learning theory etc. It has been proved that for geometric objects with finite VC dimension $d$, there exist $\epsilon$-nets of size $O({d\over{\epsilon}} \log {{1 \over \epsilon}})$~\cite{HW87}. It has been proved that $\epsilon$-nets of size $O({1\over \epsilon})$ exist for halfspaces in $\mathbb{R}^2$ and  $\mathbb{R}^3$ and pseudo-disks in $\mathbb{R}^2$~\cite{KPW92, MSW90,PR08}. Alon~\cite{Alo12} has shown a slightly super-linear lowerbound for $\epsilon$-nets with respect to lines in $\mathbb{R}^2$ thereby disproving the conjecture that there exists linear-sized $\epsilon$-nets for families of geometric objects. Recently, it has been shown that $\epsilon$-nets of size $\Omega({1\over\epsilon} \log{\log {1\over\epsilon}})$ is needed for the family of axis-parallel rectangles in $\mathbb{R}^2$~\cite{PT11}. This bound is also tight by a previous result by Aronov \emph{et al.}~\cite{AES10}. 

The dual version of $\epsilon$-net is also well studied. Here, the input is a collection of $n$ geometric objects and we have to find a sub-collection of objects that cover all points that are contained in more than $\epsilon n$ objects. This question is extensively studied in \cite{AES10,CV07, Var09}. Recently some tight lower bounds for dual $\epsilon$-nets are proved in \cite{PT11}.

Strong $\epsilon$-nets of size independent of $n$ do not exist for convex objects since they have infinite VC dimension. However, it is shown that weak $\epsilon$-nets of size polynomial in $\frac{1}{\epsilon}$ exist for convex objects in $\mathbb{R}^d$~\cite{CEG95,Mat04, MR08}. Also, a lower bound of $\Omega(\frac{1}{\epsilon}\log^{d-1} \frac{1}{\epsilon})$ is proved in \cite{BMN11}.

Weak $\epsilon$-nets are an extension of centerpoint. A centerpoint is precisely a weak $\frac{d}{d+1}$-net of size one. Small weak $\epsilon$-nets have been studied for convex objects, disks, axis-parallel rectangles and halfspaces in ~\cite{AAH09, BZ06,Dul06, MR09}. Here the size of the weak $\epsilon$-net is fixed as a small integer $i$ and the value of $\epsilon_i$ is bounded.

In this paper, we initiate the study of small strong $\epsilon$-nets. Let $\mathcal{S}$ be a family of geometric objects. Let $\epsilon_i^\mathcal{S} \in [0,1]$ represent the smallest real number such that, for any set of points $P$, there exists a set $Q \subset P$ of size $i$ which is an $\epsilon_i^\mathcal{S}$-net with respect to $\mathcal{S}$. Thus  a strong centerpoint will be an $\epsilon_1^\mathcal{S}$-net. We investigate bounds on $\epsilon_i^\mathcal{S}$ for small values of $i$ where $\mathcal{S}$ is the family of axis-parallel rectangles, halfspaces and disks.

\subsection{Our Results}
Let $\mathcal{R,H,D}$ be the family of axis-parallel rectangles, halfspaces and disks respectively.
\begin{enumerate}
\item{For axis-parallel boxes in $\mathbb{R}^d$, we show that strong centerpoint exists and obtain tight bound of $\epsilon_1^\mathcal{R} =\frac{2d-1}{2d}$. Note that strong centerpoint does not exist for halfspaces, disks or convex objects.}

\item{We give upper and lower bounds for $\epsilon_i^{\mathcal{R}}$ in the plane for small values of $i$. We also prove some general upper  bounds for $\epsilon_i^{\mathcal{R}}$ which works for all values of $i$.}
\item{We give bounds for $\epsilon_i^{\mathcal{H}}$ which are tight for even values of $i$ and almost tight for odd values of $i$. These bounds are based on a result from~\cite{KPW92}.}
\item{For the family of disks, we give a non-trivial upper bound for $\epsilon_2^{\mathcal{D}}$. }
\end{enumerate}

In section \ref{genlbcon}, we give a general lower bound construction which will be used in subsequent sections. Section \ref{rect} discusses small strong $\epsilon$-nets for axis-parallel rectangles. Sections \ref{hp} and \ref{disk} give bounds for halfspaces and disks respectively. In section \ref{lbweak}, we prove some general lower bounds for small weak $\epsilon$-nets.

\section{General lower bound construction} \label{genlbcon}

In this section, we give a recursive construction to obtain lower bounds for $\epsilon_i^{\mathcal{S}}$ where $\mathcal{S}$ is a family of compact convex objects in $\mathbb{R}^d$. More precisely, we give a lower bound construction for $\epsilon_{j+k}^\mathcal{S}$ based on the lower bound constructions of $\epsilon_j^\mathcal{S}$ and $\epsilon_k^\mathcal{S}$. These lower bounds work for weak $\epsilon$-nets as well.

\begin{thm}\label{genlower}
 $\epsilon_{j+k}^{\mathcal{S}} \ge \frac{\epsilon_j^{\mathcal{S}} \epsilon_k^{\mathcal{S}}}{\epsilon_j^{\mathcal{S}} + \epsilon_k^{\mathcal{S}}}$ for $j, k \ge 1$
\end{thm}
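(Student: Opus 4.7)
The plan is to build a single point set $P \subset \mathbb{R}^d$ from lower-bound constructions for $\epsilon_j^\mathcal{S}$ and $\epsilon_k^\mathcal{S}$ placed in well-separated bounded regions, and to show that no $(j+k)$-point subset of $P$ can be an $\epsilon$-net for $\epsilon$ below the target ratio. Fix $\epsilon' < \epsilon_j^\mathcal{S}$ and $\epsilon'' < \epsilon_k^\mathcal{S}$, and take constructions $P_j$ of $n_j$ points and $P_k$ of $n_k$ points achieving these bounds. By choosing the sizes appropriately (for instance, by scaling or padding the base constructions), I would arrange that $\epsilon' n_j = \epsilon'' n_k =: T$. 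I would then place $P_j$ inside a bounded region $R_j$ and $P_k$ inside a bounded region $R_k$, far enough apart that every witness object needed in either lower bound is disjoint from the other region. Since each object of $\mathcal{S}$ is compact (hence bounded) and only finitely many witnesses need to be fixed (one per $j$-subset of $P_j$ and per $k$-subset of $P_k$), their total extent is bounded and such a placement is possible.

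Set $P := P_j \cup P_k$. Given any $Q \subset P$ with $|Q| = j+k$, since $|Q \cap P_j| + |Q \cap P_k| = j+k$, at least one of $|Q \cap P_j| \le j$ or $|Q \cap P_k| \le k$ must hold. In the first case, extend $Q \cap P_j$ to a $j$-subset $Q_j' \subseteq P_j$ and invoke the lower-bound property of $P_j$ to obtain a witness $S \in \mathcal{S}$ with $S \cap Q_j' = \emptyset$ and $|S \cap P_j| > T$; the separation then guarantees $S \cap R_k = \emptyset$, so $S$ also misses $Q \cap P_k$ and $|S \cap P| > T$. The second case is symmetric. In either case,
\[
\frac{|S \cap P|}{|P|} > \frac{T}{n_j + n_k} = \frac{1}{1/\epsilon' + 1/\epsilon''} = \frac{\epsilon'\epsilon''}{\epsilon' + \epsilon''},
\]
so $Q$ is not an $\epsilon$-net for any $\epsilon$ below this value. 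Letting $\epsilon' \uparrow \epsilon_j^\mathcal{S}$ and $\epsilon'' \uparrow \epsilon_k^\mathcal{S}$ yields the claimed bound.

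The step I expect to be the main obstacle is the separation argument: since $\mathcal{S}$ may contain objects of unbounded size as a family, one must ensure that the particular witness objects fixed in each lower-bound construction are genuinely confined to their home region once the regions are placed far apart. Compactness of individual members of $\mathcal{S}$, together with the finiteness of the set of witnesses that need to be considered, is the key ingredient. A secondary technical point is balancing $\epsilon' n_j$ with $\epsilon'' n_k$, which may require taking the base constructions at sufficiently large sizes or making minor adjustments to the cluster sizes so that the harmonic-mean expression is actually attained in the limit.
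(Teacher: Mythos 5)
Your proposal is correct and follows essentially the same route as the paper's proof: two lower-bound constructions for $\epsilon_j^{\mathcal{S}}$ and $\epsilon_k^{\mathcal{S}}$ with sizes in ratio $\epsilon_k^{\mathcal{S}}:\epsilon_j^{\mathcal{S}}$ placed far apart, a pigeonhole argument showing any $(j+k)$-point net leaves at most $j$ points in one cluster or at most $k$ in the other, and compactness of the objects to keep the witness confined to its own cluster. The only difference is your added $\epsilon' < \epsilon_j^{\mathcal{S}}$, $\epsilon'' < \epsilon_k^{\mathcal{S}}$ slack with a limiting step, which merely makes rigorous what the paper asserts with equality.
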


\begin{proof}
 Let $P$ be a set of $n$ points in $\mathbb{R}^d$, arranged as two subsets, $P_1$ and $P_2$, containing $\frac{\epsilon_k^{\mathcal{S}}}{\epsilon_j^{\mathcal{S}} + \epsilon_k^{\mathcal{S}}} n$ and $\frac{\epsilon_j^{\mathcal{S}}}{\epsilon_j^{\mathcal{S}} + \epsilon_k^{\mathcal{S}}} n$ points respectively. Let $P_1$ be arranged corresponding to the lower bound construction for $\epsilon_j^{\mathcal{S}}$ and $P_2$ be arranged corresponding to the lower bound construction for $\epsilon_k^{\mathcal{S}}$. These two subsets are placed sufficiently far from each other. Therefore, if $N_1$ is an $\epsilon_j^{\mathcal{S}}$-net for $P_1$, there exists some $S \in \mathcal{S}$ avoiding $N_1$ such that $\vert S \cap P_1\vert = \epsilon_j^{\mathcal{S}} \vert P_1 \vert$. Since $S$ is compact and $P_2$ is placed sufficiently far from $P_1$, $S$ does not contain any point of $P_2$. Similarly, if $N_2$ is an $\epsilon_k^{\mathcal{S}}$-net for $P_2$, there exists some $S \in \mathcal{S}$ avoiding $N_2$ such that $\vert S \cap P_2\vert = \epsilon_k^{\mathcal{S}} \vert P_2 \vert$ and $S \cap P_1 = \emptyset$. Let $N$ be any $\epsilon_{j+k}^{\mathcal{S}}$-net. Then $N$ contains either $\le j$ points of $P_1$ or $\le k$ points of $P_2$. Therefore there always exists some $S \in \mathcal{S}$ that avoids $N$ and contains $\frac{\epsilon_j^{\mathcal{S}} \epsilon_k^{\mathcal{S}}}{\epsilon_j^{\mathcal{S }} + \epsilon_k^{\mathcal{S}}} n$ points.
\end{proof}

Based on this theorem, we give improved lower bounds on $\epsilon_i$ for axis-parallel rectangles in section~\ref{rect} and for small weak $\epsilon$-nets with respect to convex objects and disks in section~\ref{lbweak}.

\section{Axis-parallel Rectangles} \label{rect}
\noindent In this section, we show bounds on $\epsilon_i^{\mathcal{R}}$.

 Let $P$ be a set of $n$ points in $\mathbb{R}^d$. Assume that all points in $P$
have distinct co-ordinates, i.e. if $p=(p_1,p_2,...,p_d)$ and $q=(q_1,q_2,....,q_d)$ are two points in $P$, then $p_i \neq q_i$ for all $i$, $1\leq i \leq d$. This assumption can be easily removed by slightly perturbing the input point set such that the co-ordinates are distinct. It can be seen that an $\epsilon$-net for the
perturbed set acts as an $\epsilon$-net for the original set also(see section 4
in \cite{AAH09}). 

\subsection{Strong centerpoints in $\mathbb{R}^d$}
\noindent Let $\mathcal{R}$ be the family of axis-parallel boxes in $\mathbb{R}^d$.
\begin{thm}\label{ddim}
 $\epsilon_1^{\mathcal{R}} = \frac{2d-1}{2d}$
\end{thm}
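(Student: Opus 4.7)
The plan is to establish matching upper and lower bounds of $(2d-1)/(2d)$ on $\epsilon_1^{\mathcal{R}}$.

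For the upper bound $\epsilon_1^{\mathcal{R}} \le (2d-1)/(2d)$, I will show that any set $P$ of $n$ points contains a point $p$ that is \emph{rank-central}, meaning that for every coordinate $i \in \{1, \ldots, d\}$, the rank of $p$ along coordinate $i$, counted from either direction, is at least $\lceil n/(2d) \rceil$. Such a $p$ is automatically a strong centerpoint: an axis-parallel box $B$ that misses $p$ must do so because $p$ lies on the wrong side of $B$ along some coordinate $i$ and direction, in which case $B$ misses every point of $P$ on the same side of $p$ along coordinate $i$ --- at least $\lceil n/(2d) \rceil$ points --- forcing $|B \cap P| \le (2d-1)n/(2d)$. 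To see that a rank-central point exists, call a point \emph{bad for axis $i$} if it fails the rank condition for $i$; there are at most $2(\lceil n/(2d) \rceil - 1)$ such points per axis, and summing over $d$ axes gives at most $2d(\lceil n/(2d) \rceil - 1) < n$ total bad points (using $\lceil x \rceil < x + 1$). Hence some point of $P$ is good for every axis.

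For the matching lower bound, I will construct a point set of $n = 2dm$ points divided into $2d$ clusters of $m$ points, one cluster for each pair (axis $i$, direction $\sigma \in \{+, -\}$). The cluster $(i,-)$ sits in an extremal region where its $i$-th coordinate is smaller than that of every other cluster, while its remaining coordinates lie inside the corresponding ranges of the other clusters; symmetrically for $(i,+)$. Given any $p \in P$, say $p$ lies in cluster $(i,\sigma)$, then the axis-parallel box obtained by cutting off just past cluster $(i,\sigma)$ along coordinate $i$ excludes precisely that cluster and contains the other $(2d-1)m = (2d-1)n/(2d)$ points. So no singleton is a $c$-net for $c < (2d-1)/(2d)$, establishing the lower bound.

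The main subtlety is the strict inequality in the union-bound step: the bad-point counts are integers while $n/(2d)$ need not be, so I must verify that $2d(\lceil n/(2d) \rceil - 1) < n$ holds universally --- this is exactly $\lceil x \rceil < x + 1$ --- guaranteeing that a rank-central $p$ can always be selected from $P$ itself. The rest is largely organizational: arranging the clusters in the lower bound so that a single axis-parallel cut separates the target cluster without losing any of the other $(2d-1)m$ points, which is automatic once the non-extremal coordinates of each cluster are placed in the interior of the relevant ranges.
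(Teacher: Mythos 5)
Your proof is correct and follows essentially the same route as the paper: your ``rank-central'' point is exactly a point of the paper's central cell $P_{22\ldots2}$ (the paper's counting $\mathcal{K}\ge n-|P_{22\ldots2}|$ is the same union bound over extreme points per axis, and your explicit $\lceil n/(2d)\rceil$ bookkeeping just handles divisibility more carefully), and the per-coordinate slab-avoidance argument for the net property is identical. Your lower bound with $2d$ clusters, one extremal per (axis, direction), is the same construction the paper uses with clusters at $\pm e_i$ cut off by a single axis-parallel halfspace realized as a large box.
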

\begin{figure}
\begin{center}
\scalebox{0.60}{\input{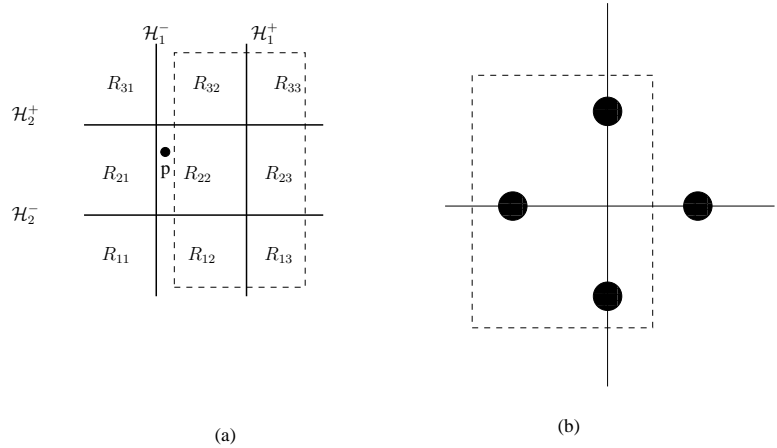}}
\caption{Bounds for $\epsilon_1^\mathcal{R}$ in $\mathbb{R}^2$}

\label{rect1}
\end{center}
\end{figure}
\begin{proof}
Let $\mathcal{H}_i^+$ and $\mathcal{H}_i^-$, $1\leq i \leq d$, be two axis-parallel hyperplanes orthogonal to the $i^{th}$ dimension that divide $P$ into three slabs. Let $\mathcal{P}_i^+ $ be the subset of $P$ contained in the positive hyperspace defined by   $\mathcal{H}_i^+$ and $\mathcal{P}_i^- $ be the subset of $P$ contained in the negative hyperspace defined by   $\mathcal{H}_i^-$. $\mathcal{H}_i^+$ and $\mathcal{H}_i^-$ are placed such that $|\mathcal{P}_i^+| = |\mathcal{P}_i^-| = \frac{n}{2d}-1$. The hyperplanes $\mathcal{H}_i^+$ and $\mathcal{H}_i^-$, $1\leq i \leq d$, partition $\mathbb{R}^d$ into $3^d$
axis-parallel $d$-dimensional boxes. Indexing the partition along each
dimension, these boxes are denoted as $R_{x_1 x_2 ... x_d}$,
where $x_i \in \{1,2,3\}$ (see figure~\ref{rect1}(a) for the upper bound construction in $\mathbb{R}^2$). Let $P_{x_1 x_2 ... x_d} =R_{x_1 x_2 ... x_d} \cap P$. We claim that $P_{22...2} \ne \emptyset$. 

Let $\mathcal{K} = \sum\limits_{i=1}^d (|\mathcal{P}_i^-| + |\mathcal{P}_i^+|) =
n-2d$. Since none of the points in  $P_{22...2}$ is counted in any
$\mathcal{P}_i^+$ or $\mathcal{P}_i^-$, $\mathcal{K} \ge n - |P_{22...2}|$. This implies that
$|P_{22...2}| \ge 2d$. 

Let $p$ be any point in $P_{22...2}$. We claim that
$\{p\}$ is a $\frac{2d-1}{2d}$-net. Any $d$-dimensional
box that does not contain $p$ has to avoid some $\mathcal{P}_i(\mathcal{P}_i^+$ or $\mathcal{P}_i^-)$ containing
$\frac{n}{2d}-1$ points. Hence it contains at most $\frac{2d-1}{2d} n$ points.

For the lower bound, place $2d $ subsets of $\frac{n}{2d}$ points such that each axis has two subsets at unit distance on either side of the origin. The lower bound construction for $\epsilon_1^\mathcal{R}$ in $\mathbb{R}^2$ is shown in figure~\ref{rect1}(b). Let $\{q\}$ be an $\epsilon$-net. Without loss of generality, assume that $q$ is chosen from the subset placed at coordinates $(1,0,0,...0)$.
Now the $d$-dimensional axis-parallel box defined by $x \le 0.5$ avoids
$q$ but contains all the
remaining $2d-1$ subsets thereby containing $\frac{2d-1}{2d}n$ points.

\end{proof}

\noindent

\subsection{Upper Bounds on $\epsilon_i^{\mathcal{R}}$}
\noindent Let $P$ be a set of $n$ points and $\mathcal{R}$ be the family of axis-parallel rectangles in $\mathbb{R}^2$. We prove upper bounds for $\epsilon_i^{\mathcal{R}}$.
\begin{lem}\label{halfhalf}
There exists a point $p\in P$ with coordinates
$(x\textquotesingle,y\textquotesingle)$ such that the
halfspaces $x \geq x\textquotesingle$ and $y \geq y\textquotesingle $ contain at least $\frac{n}{2}$ points of $P$.
\end{lem}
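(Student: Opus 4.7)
The plan is to prove this by a short pigeonhole argument on the $x$- and $y$-orderings of $P$. Label the points of $P$ by increasing $x$-coordinate as $p_1^{x}, p_2^{x}, \ldots, p_n^{x}$, which is well-defined since all coordinates are distinct, and analogously by increasing $y$-coordinate. For a point $p$ of $x$-rank $r$, the halfspace $x \ge x_p$ contains exactly $n - r + 1$ points of $P$, so it contains at least $n/2$ points if and only if $r \le \lfloor n/2 \rfloor + 1$. Let $A \subseteq P$ be the set of the $\lfloor n/2 \rfloor + 1$ points of smallest $x$-coordinate, and let $B \subseteq P$ be the analogous set for the $y$-coordinate.

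Next, I would finish with inclusion–exclusion: $|A \cap B| \ge |A| + |B| - n = 2\lfloor n/2 \rfloor + 2 - n \ge 1$, so $A \cap B$ is nonempty. Any $p = (x', y') \in A \cap B$ is the required witness: membership in $A$ gives $|\{q \in P : x_q \ge x'\}| \ge n/2$, and membership in $B$ gives $|\{q \in P : y_q \ge y'\}| \ge n/2$.

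There is essentially no obstacle in the argument; the only thing to be careful about is the size of $A$ and $B$. If one were to take just the $\lceil n/2 \rceil$ points of smallest coordinate on each axis, then for even $n$ one would get $|A| + |B| = n$ and the intersection could be empty (as happens for an antidiagonal configuration such as $\{(i, n-i+1)\}_{i=1}^n$). Including the one additional boundary point on each side makes the pigeonhole go through uniformly for all $n$, and the distinct-coordinate assumption invoked at the start of Section~\ref{rect} ensures the ranks used above are well-defined.
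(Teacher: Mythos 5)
Your proof is correct, and it takes a somewhat different route from the paper's. The paper bisects $P$ by a vertical and a horizontal line into quadrants $A,B,C,D$ (each slab holding $\frac{n}{2}$ points) and argues by cases: if the quadrant that is simultaneously leftmost and bottommost contains a point of $P$, any such point works; otherwise the configuration degenerates (two opposite quadrants hold $\frac{n}{2}$ points each) and it picks the point of the upper-left group closest to the horizontal bisector. You instead take $A$ to be the $\lfloor n/2\rfloor+1$ points of smallest $x$-coordinate and $B$ the $\lfloor n/2\rfloor+1$ points of smallest $y$-coordinate, and get a common point by inclusion--exclusion, since $2\lfloor n/2\rfloor+2-n\ge 1$; the rank computation $n-r+1\ge n/2$ for $r\le\lfloor n/2\rfloor+1$ then finishes it. The two arguments rest on the same underlying fact (a point that is low in both coordinate orders exists), but yours is uniform: the ``$+1$'' padding absorbs exactly the boundary situation that the paper must handle as its separate Case 2, it needs no geometric partition or figure, and it works verbatim for odd $n$, where the paper's ``each slab contains $\frac{n}{2}$ points'' is only meaningful after rounding. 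Your side remark about why $\lceil n/2\rceil$ points per axis would not suffice (the antidiagonal configuration) correctly identifies the one place where the counting could fail, and the distinct-coordinates assumption you invoke is indeed available from the start of the section.
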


\begin{proof} Divide P into two horizontal and two vertical slabs such that each slab contains $\frac{n}{2}$ points (see figure~\ref{fig1}).

\begin{itemize}
 \item Case 1: $ D\cap P \neq \emptyset$. Any point $p\in D\cap P$ has the desired
property.
\item Case 2: $ D\cap P  = \emptyset$. In this case, $|A\cap P|=|C\cap
P|=\frac{n}{2}$ and $B\cap P=\emptyset$. Then the point $p \in A\cap P$ with the
smallest perpendicular distance to the horizontal line has the desired property.
\end{itemize}

\begin{figure}
\begin{center}
\includegraphics[scale=0.35]{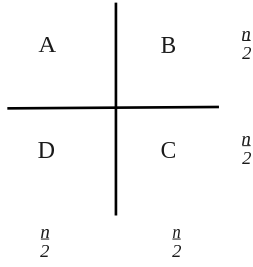}
\caption{Bisection of $P$ by vertical and horizontal lines}
\label{fig1}
\end{center}
\end{figure}
\end{proof}
\vspace{0.2in}
\begin{lem}\label{lemma2RU}
$\epsilon_{2}^{\mathcal{R}} \leq {5\over8}$	

\end{lem}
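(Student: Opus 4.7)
Plan. The plan is to apply Lemma~\ref{halfhalf} twice to pick the two net points, and then argue by case analysis on rectangles avoiding both. First, I would apply Lemma~\ref{halfhalf} to $P$ to obtain $p_1=(x_1,y_1)\in P$ with $|\{x\ge x_1\}\cap P|,\,|\{y\ge y_1\}\cap P|\ge n/2$. Any axis-parallel rectangle $R$ avoiding $p_1$ then lies in one of the four open halfspaces around $p_1$; in $\{x<x_1\}$ or $\{y<y_1\}$ we automatically have $|R\cap P|\le n/2\le 5n/8$, so only rectangles contained in the right halfspace $\{x>x_1\}$ or the top halfspace $\{y>y_1\}$ still need to be controlled.

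Next, I would apply Lemma~\ref{halfhalf} in the reflected orientation (through $180^{\circ}$ rotation of $P$) to obtain $p_2=(x_2,y_2)\in P$ with $|\{x\le x_2\}\cap P|,\,|\{y\le y_2\}\cap P|\ge n/2$. By the symmetric argument this takes care of rectangles that avoid $p_2$ and lie in $\{x>x_2\}$ or $\{y>y_2\}$. The rectangles that avoid both $p_1$ and $p_2$ and are not yet controlled must therefore be contained in one of four regions: the vertical strip $\{x_1<x<x_2\}$, the horizontal strip $\{y_1<y<y_2\}$, the lower-right corner $\{x>x_1,\,y<y_2\}$, or the upper-left corner $\{x<x_2,\,y>y_1\}$.

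The core combinatorial step is to show that each of these four regions contains at most $5n/8$ points of $P$, for a judicious extremal choice of $p_1$ and $p_2$. For the strip case I would use
\[
|\{x_1<x<x_2\}\cap P|=n-|\{x\le x_1\}\cap P|-|\{x\ge x_2\}\cap P|,
\]
and pick $p_1,p_2$ to minimise the slack in Lemma~\ref{halfhalf} (so that $|\{x\ge x_1\}|$ and $|\{x\le x_2\}|$ are as close to $n/2$ as possible subject to the $y$-constraints), which pushes each of the two subtracted terms above $3n/8$ and yields the required bound; the horizontal strip is symmetric. For the corner cases I would apply inclusion-exclusion on the complement, e.g.\ $\{x\le x_1\}\cup\{y\ge y_2\}$, whose two halves are lemma-controlled, and then bound the overlap using the geometric positions of $p_1$ and $p_2$.

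The main obstacle will be the strip bound: Lemma~\ref{halfhalf} only guarantees the inequality ``$\ge n/2$'', and the lemma-eligible point closest to the $x$-median might still leave too much slack. I expect to handle this by a short sub-case split on whether the extremal lemma-eligible choice coincides with the actual median of $P$, the two cases being either automatically fine (when $x_1\ge x_2$ and $y_1\ge y_2$, making both strips empty) or reducing to the inclusion-exclusion argument after a careful reselection of $p_1$ or $p_2$.
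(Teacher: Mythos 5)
There is a genuine gap: your core claim---that for a judicious choice of the two global Lemma~\ref{halfhalf} points each of the four residual regions contains at most $5n/8$ points---is false, and no choice permitted by the lemma can repair it. Take $n=8m$ points in four tight clusters: $A$ ($3m$ points) near $(-10,10)$, $B$ ($3m$ points) near $(10,-10)$, $C$ ($m$ points) near $(-20,-20)$ and $D$ ($m$ points) near $(20,20)$, where inside $A$ and $B$ the points lie on tiny segments of positive slope. A point $p_1$ with $|\{x\ge x_1\}\cap P|\ge n/2$ and $|\{y\ge y_1\}\cap P|\ge n/2$ must be a point of $C$, the lowest (equivalently leftmost) point of $A$, or the leftmost (lowest) point of $B$; symmetrically $p_2$ must lie in $D$ or be the topmost (rightmost) point of $B$ or the rightmost (topmost) point of $A$. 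For every such pair there is an axis-parallel rectangle avoiding both and containing at least $6m-2=\frac{3n}{4}-2>\frac{5n}{8}$ points: if $p_1\in C$ and $p_2\in D$, take the bounding box of $A\cup B$; otherwise the chosen points are extreme points of $A$ or $B$, and a slightly shrunken box, cut just past the offending coordinate, still captures all but at most two points of $A\cup B$ while excluding $C$ and $D$. For instance, with $p_1$ the leftmost point of $A$ and $p_2$ the rightmost point of $B$, your vertical strip $\{x_1<x<x_2\}$ holds about $3n/4$ points; the ``slack-minimising'' or reselection fixes you propose cannot help, since the admissible choices are exhausted above. The structural reason is that Lemma~\ref{halfhalf} gives you no lower bound on $|\{x\le x_1\}\cap P|$ or $|\{y\ge y_2\}\cap P|$, so neither the strips nor the corner regions are controlled.

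What the paper does differently, and what your argument is missing, is a finer slab structure together with a \emph{local} (not global) use of Lemma~\ref{halfhalf}: it partitions $P$ by $3n/8$, $n/4$, $3n/8$ in each direction; if the central cell is nonempty, a single point there is already a $\frac{5}{8}$-net, because any rectangle avoiding it misses an extreme slab of $3n/8$ points; otherwise it applies Lemma~\ref{halfhalf} inside the two heavier diagonally opposite $2\times2$ blocks, whose union has at least $3n/4$ points, so a rectangle meeting all three rows and columns must forgo half of each block, i.e.\ at least $3n/8$ points. In the example above those local choices are roughly the median points of $A$ and of $B$---exactly the points that your global constraint forbids.
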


\begin{figure}
\begin{center}
 \input{r2ub.pstex_t}
\caption{Upper bound for $\epsilon^{\cal{R}}_2$}
\label{fig2RU}
\end{center}
\end{figure}

\begin{proof}
 Divide $P$ into three horizontal slabs containing $\frac{3n}{8}$, $\frac{n}{4}$
and $\frac{3n}{8}$ points respectively. Similarly, divide $P$ into three
vertical slabs in the same proportion to get a grid with nine rectangular
regions (figure~\ref{fig2RU}(a)).
\begin{itemize}
 \item Case 1: $E \cap P \ne \emptyset$: Let $x$ be any point in E. Now $\{x\}$
is a $\frac{5}{8}$-net since any axis-parallel rectangle that avoids ${x}$ will
avoid an extreme slab which contains $\frac{3}{8}n$ points.
\item Case 2: $E \cap P = \emptyset$: Let $R_1, R_2, R_3, R_4$ be the regions $A \cup B \cup D \cup E,B\cup C \cup E\cup F, D\cup E\cup G\cup H, E\cup F\cup H\cup I$ respectively and let $P_i$ denote $R_i \cap P$ for all $i$, $1\leq i \leq 4$. Since $  E \cap P = \emptyset$, we have
$\vert P_1 \vert + \vert P_2 \vert +\vert P_3 \vert + \vert P_4 \vert = n + | (B \cup D \cup F \cup H) \cap P | =
\frac{3n}{2}$. Therefore either $(\vert P_1 \vert + \vert P_4 \vert) \geq \frac{3n}{4}$ or
$(\vert P_2 \vert + \vert P_3 \vert) \geq \frac{3n}{4}$. Without loss of generality
assume that
$(\vert P_2 \vert + \vert P_3 \vert) \geq \frac{3n}{4}$. Using lemma~\ref{halfhalf}, choose a point $p=(p_x,p_y) \in P_2$ such that the halfspaces $x \ge p_x$ and $y \ge p_y$ contain at least half of the points in $P_2$. Similarly, choose a point $q=(q_x,q_y) \in P_3$ such that the halfspaces
$x \leq q_x$ and $y \leq q_y$ contain at least half of the points in $P_3$. We claim that $\{p,q\}$ is a $\frac{5}{8}$-net. Any axis-parallel rectangle that does not take points from all the three rows and columns contains at most $\frac{5n}{8} $ points of $P$. So assume that $R$ is an axis-parallel rectangle that takes points from all the rows and columns. To avoid $\{p,q\}$, $R$ must avoid at least half of
the points from
$P_2$ as well as $P_3$
(figure~\ref{fig2RU}(b)). So it must avoid at least
$\frac{\vert P_2 \vert + \vert P_3 \vert}{2}=\frac{3n}{8}$ points. Therefore any axis-parallel rectangle
that avoids $\{p,q\}$ contains
at most $\frac{5n}{8}$ points.
\end{itemize}

\end{proof}

\vspace{0.5in}
\noindent Now we discuss two general recursive constructions for
$\epsilon_i^\mathcal{R}$. 
\begin{figure}
\begin{center}
\scalebox{0.5}{\input{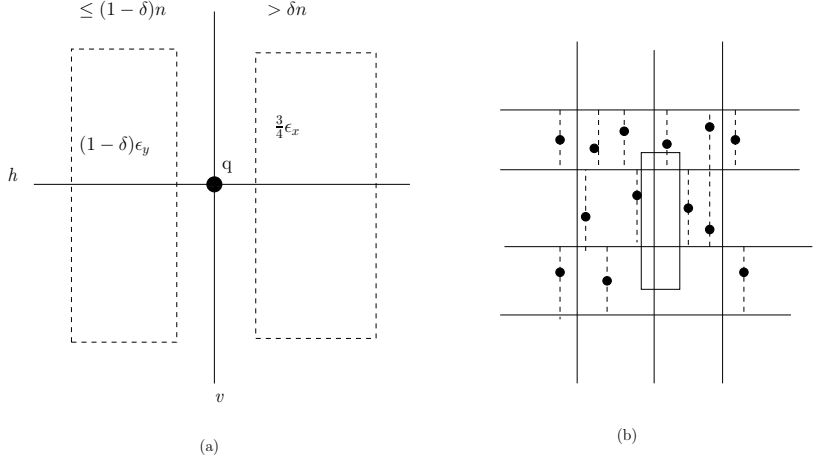}}
\caption{General constructions for axis-parallel rectangles}
\label{genfig}
\end{center}
\end{figure}
\begin{thm}\label{onept}
$\epsilon_{2(x+y)+1}^\mathcal{R} \leq
\max({{3\over{4}}\epsilon_x^\mathcal{R}},{{\epsilon_y^\mathcal{R}
\epsilon_z^\mathcal{R}}\over{\epsilon_y^\mathcal{R}+\epsilon_z^\mathcal{R}}})$
for $x,y \geq 0, x\geq y$ and $z=\lfloor{x+y\over2}\rfloor $
\end{thm}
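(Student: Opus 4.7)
The plan is to construct an explicit net $N \subseteq P$ of size $2(x+y)+1$ achieving the claimed $\epsilon$, following the ``splitter + recursive sub-nets + stitching point'' paradigm of Lemma~\ref{lemma2RU}. The two terms in the outer max correspond to two types of sub-region in which a smaller recursive net is placed: an ``$\epsilon_y^{\mathcal{R}}$-region'' of size $\leq (1-\delta)n$, where an $\epsilon_y^{\mathcal{R}}$-net of size $y$ yields the bound $(1-\delta)\epsilon_y^{\mathcal{R}}\,n$, and an ``$\epsilon_x^{\mathcal{R}}$-region'' of size $\leq \tfrac{3}{4}n$, where an $\epsilon_x^{\mathcal{R}}$-net of size $x$ yields $\tfrac{3}{4}\epsilon_x^{\mathcal{R}}\,n$. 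Setting $\delta = \epsilon_y^{\mathcal{R}}/(\epsilon_y^{\mathcal{R}}+\epsilon_z^{\mathcal{R}})$ makes the first bound coincide with $\epsilon_y^{\mathcal{R}}\epsilon_z^{\mathcal{R}}/(\epsilon_y^{\mathcal{R}}+\epsilon_z^{\mathcal{R}})$, the second argument of the outer max.

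First I sweep a horizontal line $h$ to split $P$ into a top slab $P_T$ of size $\leq (1-\delta)n$ and a bottom slab $P_B$ of size $>\delta n$, and drop an $\epsilon_y^{\mathcal{R}}$-net of size $y$ on $P_T$ (existing by definition of $\epsilon_y^{\mathcal{R}}$). Because any axis-parallel rectangle intersected with the upper half-plane remains an axis-parallel range, this net pierces every rectangle whose $P$-points lie in $P_T$ and whose weight exceeds $(1-\delta)\epsilon_y^{\mathcal{R}}\,n$. Next, inside $P_B$, I use a median-sweep argument in the spirit of Lemma~\ref{halfhalf} to pick a stitching point $q\in P_B$ together with a vertical line $v$ through $q$ such that each of the two vertical sub-slabs of $P_B$ contains at most $\tfrac{3}{4}n$ points of $P$. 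I then place an $\epsilon_x^{\mathcal{R}}$-net of size $x$ in each sub-slab (using $2x$ points), an $\epsilon_y^{\mathcal{R}}$-net of size $y$ inside $P_B$ (using the remaining $y$ points), and finally add $q$; the sizes sum to $y + 2x + y + 1 = 2(x+y)+1$ as required.

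The piercing analysis classifies every net-avoiding axis-parallel rectangle $R$ into three cases. (i) If $R\cap P \subseteq P_T$, the $\epsilon_y^{\mathcal{R}}$-net on $P_T$ pierces $R$ whenever $|R\cap P| > (1-\delta)\epsilon_y^{\mathcal{R}}\,n = \epsilon_y^{\mathcal{R}}\epsilon_z^{\mathcal{R}}/(\epsilon_y^{\mathcal{R}}+\epsilon_z^{\mathcal{R}})\cdot n$. (ii) If $R\cap P$ lies entirely in one of the vertical sub-slabs of $P_B$, the local $\epsilon_x^{\mathcal{R}}$-net pierces $R$ whenever $|R\cap P| > \tfrac{3}{4}\epsilon_x^{\mathcal{R}}\,n$, using the $\tfrac{3}{4}n$ cap from the median split. (iii) Otherwise $R$ crosses $h$ or $v$, and the stitching point $q$ combined with the $\epsilon_y^{\mathcal{R}}$-nets on $P_T$ and $P_B$ forces the bound back into the claimed max.

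The main obstacle will be case (iii). A crossing rectangle draws its mass from several sub-regions at once, and to bound its total $P$-weight by $\epsilon n$ I must argue that avoiding $q$ forces $R$ to miss one of the two vertical sub-slabs of $P_B$ almost entirely, while avoiding the $\epsilon_y^{\mathcal{R}}$-net on $P_T$ forces $R$ to miss half of the top mass. The balance identity $(1-\delta)\epsilon_y^{\mathcal{R}} = \delta\epsilon_z^{\mathcal{R}}$ and the $\tfrac{3}{4}$ cap from the median vertical split are chosen precisely so that in every worst case the total surviving mass collapses to $\max\!\bigl(\tfrac{3}{4}\epsilon_x^{\mathcal{R}},\,\epsilon_y^{\mathcal{R}}\epsilon_z^{\mathcal{R}}/(\epsilon_y^{\mathcal{R}}+\epsilon_z^{\mathcal{R}})\bigr)\cdot n$, yielding the stated bound.
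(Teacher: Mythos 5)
Your construction has a genuine gap exactly where you flag it: case (iii) is not an obstacle to be smoothed over later, it is the heart of the theorem, and the way you have allocated your points it does not go through. In your layout the horizontal line $h$ does \emph{not} pass through the stitching point $q$ (which lives inside $P_B$), so a rectangle can avoid every net point by staying strictly on one side of the vertical line $v$ while straddling $h$, collecting almost all of $P_T$ \emph{plus} a large share of one vertical sub-slab of $P_B$. Neither the $\epsilon_y^{\mathcal{R}}$-net on $P_T$ nor the local $\epsilon_x^{\mathcal{R}}$-net sees this rectangle until it exceeds their \emph{separate} thresholds, so its total weight can be close to $(1-\delta)\epsilon_y^{\mathcal{R}}n+\tfrac34\epsilon_x^{\mathcal{R}}n$, far above the claimed $\max$. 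Concretely, take $x=1$, $y=0$ (so $\delta=\tfrac12$ and the claimed bound is $\tfrac{9}{16}$): put a cluster of almost $\tfrac n2$ points above $h$, and arrange the left half of $P_B$ as two clusters of $\tfrac n8$ points in distinct columns and at distinct heights, with the top cluster directly above one of them. Whichever single point your construction places in that sub-slab, there is an axis-parallel rectangle strictly left of $v$ avoiding it (and $q$, and the point in the right sub-slab) that contains the top cluster together with one free cluster, i.e.\ about $\tfrac{5n}{8}>\tfrac{9n}{16}$ points. A symptom of the missing mechanism is that $z$ appears in your argument only through the formula for $\delta$: you never actually place an $\epsilon_z^{\mathcal{R}}$-net, yet the term $\epsilon_y^{\mathcal{R}}\epsilon_z^{\mathcal{R}}/(\epsilon_y^{\mathcal{R}}+\epsilon_z^{\mathcal{R}})$ has to come from somewhere.

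The paper's proof is engineered precisely to avoid your case (iii): $q$ is taken to be the strong centerpoint of the \emph{whole} set $P$ (Theorem~\ref{ddim}), and \emph{both} splitting lines $v$ and $h$ pass through $q$. Then any rectangle avoiding $q$ is confined to a single one of the four halfplanes (a vertical or a horizontal slab), each containing at most $\tfrac34 n$ points, so no rectangle ever mixes mass from two regions. The budget of $2(x+y)$ further points is then spent symmetrically on the vertical pair and the horizontal pair of slabs, with a case analysis per direction: if one slab has at least $\delta n$ points it gets an $\epsilon_x^{\mathcal{R}}$-net (giving $\tfrac34\epsilon_x^{\mathcal{R}}n$) and the other an $\epsilon_y^{\mathcal{R}}$-net (giving $(1-\delta)\epsilon_y^{\mathcal{R}}n$); if both have fewer than $\delta n$ points, each gets an $\epsilon_z^{\mathcal{R}}$-net (giving $\delta\epsilon_z^{\mathcal{R}}n$), which is where $z=\lfloor\tfrac{x+y}{2}\rfloor$ is actually used and where the balancing choice of $\delta$ earns the second term of the $\max$. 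To salvage your write-up you would have to restore this confinement property (route both lines through a single strong centerpoint and net both slab directions), at which point you essentially recover the paper's argument.
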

\begin{proof}
Construct an $\epsilon_1^\mathcal{R}$-net $\{q\}$ for $P$ as described in 
Theorem~\ref{ddim}. Let $v$ and $h$ be vertical and horizontal lines through $q$ that divide $P$ into two vertical and two horizontal slabs respectively. Let $\delta \in [0,1]$ be a parameter that will be fixed later. If either of the two vertical slabs contains at least $\delta n$ points of
$P$ then construct an $\epsilon_x^\mathcal{R}$-net for the points in the slab containing at least $\delta n$ points
and $\epsilon_y^\mathcal{R} $-net for the points in the other slab. If both the vertical 
slabs contain less than $\delta n$ points, then construct an
$\epsilon_z^\mathcal{R}$-net for the points in each of the two vertical slabs. Repeat the same
construction for the horizontal slabs also. We have thus added at most
$2(x+y)+1$ points to our $\epsilon$-net $Q$.

Since $q\in Q$, any
axis-parallel rectangle that avoids $Q$ is contained in one of the 
(vertical or horizontal) slabs and this slab has at most ${3n\over4}$ points.

First consider the case where there is a vertical or horizontal slab with at least $\delta n$ points.
 After adding an $\epsilon_x^\mathcal{R}$-net to
$Q$,  any axis-parallel rectangle that avoids $Q$ contains at most
${3\over4}\epsilon_x^\mathcal{R}n$ points of $P$ from this slab. Similarly,
the other slab has at most $(1-\delta) n$ points of $P$ and any
axis-parallel rectangle that avoids $Q$ contains at most
$(1-\delta)\epsilon_y^\mathcal{R}n$ points of $P$ from this slab. Thus an
axis-parallel rectangle that avoids $Q$ and is contained in  one of these slabs
has at most $\max({3\over4}\epsilon_x^\mathcal{R}n,
(1-\delta)\epsilon_y^\mathcal{R}n )$ points(see figure~\ref{genfig} (a)). In the
case where both the slabs have less than $\delta n$ points, any
axis-parallel rectangle that avoids $Q$ has at most $\delta
\epsilon_z^\mathcal{R}n$ points. Thus any axis-parallel rectangle that avoids $Q$ has at most $max({3\over4}\epsilon_x^\mathcal{R}n, (1-\delta)\epsilon_y^\mathcal{R}n,
\delta \epsilon_z^\mathcal{R}n)$ points. Setting $\delta=
{{\epsilon_y^\mathcal{R}}\over{\epsilon_y^\mathcal{R}
+\epsilon_z^\mathcal{R}}}$ so that $(1-\delta)\epsilon_y^\mathcal{R}=\delta \epsilon_z^\mathcal{R}$, we get $\epsilon_{2(x+y)+1}^\mathcal{R} \leq
\max({{3\over{4}}\epsilon_x^\mathcal{R}},{{\epsilon_y^\mathcal{R}
\epsilon_z^\mathcal{R}}\over{\epsilon_y^\mathcal{R}+\epsilon_z^\mathcal{R}}})$.
\end{proof}
\begin{thm}\label{grid}
$\epsilon_{2(x-2)(y-1)+jx+ky}^\mathcal{R} \leq
\max({{{2\epsilon_j^{\mathcal{R}}} \over{x}},{{\epsilon_{k}^{\mathcal{R}}} \over
{y}}})$ for $x,y \geq 2$ and $j,k\geq 0$.
\end{thm}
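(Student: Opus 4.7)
The plan is to combine a grid of local $\epsilon$-nets with a small number of divider points inside interior columns. Divide $P$ into $x$ vertical strips $V_1,\ldots,V_x$ each containing $n/x$ points and, independently, into $y$ horizontal strips $H_1,\ldots,H_y$ each containing $n/y$ points. In each column $V_i$ take an $\epsilon_j^{\mathcal{R}}$-net of size $j$, and in each row $H_l$ take an $\epsilon_k^{\mathcal{R}}$-net of size $k$. For every interior column $V_i$ (with $2\le i\le x-1$) and every horizontal divider $d_l$ separating $H_l$ from $H_{l+1}$ (with $1\le l\le y-1$), add two further points drawn from $V_i$: $t_{i,l}$, the topmost point of $V_i\cap H_l$, and $b_{i,l+1}$, the bottommost point of $V_i\cap H_{l+1}$. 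The total count is $jx + ky + 2(x-2)(y-1)$, matching the theorem.

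Next I would analyze any axis-parallel rectangle $R=[x_1,x_2]\times[y_1,y_2]$ that avoids the resulting net $Q$, by case analysis. If $R$ intersects at most two columns, the column $\epsilon_j^{\mathcal{R}}$-nets immediately give $|R\cap P| \le 2\epsilon_j^{\mathcal{R}} n/x$. If $R$ intersects at least three columns but $[y_1,y_2]$ is contained in a single row $H_l$, the row $\epsilon_k^{\mathcal{R}}$-net of $H_l$ gives $|R\cap P|\le \epsilon_k^{\mathcal{R}} n/y$. The remaining case is that $R$ meets $\ge 3$ columns (so fully spans at least one interior column) and $[y_1,y_2]$ strictly crosses some divider $d_l$. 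For each fully-spanned interior column $V_i$, $R$ horizontally covers $V_i$ yet avoids both $t_{i,l}$ and $b_{i,l+1}$; since $d_l\in[y_1,y_2]$ and $y_{t_{i,l}}\le d_l\le y_{b_{i,l+1}}$, avoidance forces $[y_1,y_2]\subset(y_{t_{i,l}}, y_{b_{i,l+1}})$. But by extremality of $t_{i,l}$ (the highest $V_i$-point at or below $d_l$) and $b_{i,l+1}$ (the lowest one at or above $d_l$), this open interval contains no point of $V_i$, so $R\cap V_i=\emptyset$. Hence all of $R\cap P$ lies in the two partially-spanned boundary columns of $R$, each contributing at most $\epsilon_j^{\mathcal{R}} n/x$ by its column net, yielding $|R\cap P|\le 2\epsilon_j^{\mathcal{R}} n/x$. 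Taking the worst of the three bounds gives $|R\cap P|\le \max(2\epsilon_j^{\mathcal{R}}/x,\ \epsilon_k^{\mathcal{R}}/y)\cdot n$.

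The main obstacle is the argument in the third case: verifying that the two carefully chosen divider points really do empty out every fully-spanned interior column whenever $R$ straddles the corresponding divider. This reduces to the clean geometric observation that the open interval $(y_{t_{i,l}}, y_{b_{i,l+1}})$ is $V_i$-point-free by construction, but one has to handle possibly empty cells (which is most cleanly done by redefining $t_{i,l}$, $b_{i,l+1}$ as the $V_i$-points closest to $d_l$ from each side, regardless of which strip they formally lie in) and invoke the standing distinct-coordinates assumption to rule out boundary coincidences where a point could lie exactly on a divider.
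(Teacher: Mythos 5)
Your proposal is correct and follows essentially the same route as the paper's proof: the same grid decomposition with local $\epsilon_j^{\mathcal{R}}$-nets in the $x$ slabs of $n/x$ points and $\epsilon_k^{\mathcal{R}}$-nets in the $y$ slabs of $n/y$ points, plus the $2(x-2)(y-1)$ points nearest to the grid lines on either side within each interior slab, and the same case analysis showing a rectangle avoiding these points can only pick up points from its two extreme slabs or from a single slab in the other direction (your version is merely the transpose of the paper's orientation, with the empty-cell and divider-coincidence issues spelled out a bit more explicitly).
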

\begin{proof}
Divide $P$ into $x$ horizontal slabs and $y$ vertical slabs to get a grid with each horizontal slab containing $n\over x$ points and each vertical slab containing
$n\over y$ points. Let the horizontal slabs be denoted as $H_1,H_2,\dots,H_x$. For a horizontal slab $H_i$, there are $y-1$ vertical lines of the grid intersecting it. For each of these lines, find two points(if present) of $P$ in $H_i$ that has the
least perpendicular distance from that line on either side. Repeat this for all horizontal slabs $H_i, 2\leq i \leq x-1$ (see figure~\ref{genfig}~(b)). Add these (at most)
$2(x-2)(y-1)$ points  to the $\epsilon^\mathcal{R}$-net $Q$.

 We claim that any
axis-parallel rectangle that avoids these points has at most $max({2n\over
x},{n\over y})$ points. If an axis-parallel rectangle intersects at most one vertical slab or at most two horizontal slabs, then it contains at most $max({2n\over
x},{n\over y})$ points. Let
$R$ be an axis-parallel rectangle that intersects at least two vertical slabs and at least three horizontal slabs. Let $H_i,H_{i+1},\dots,H_m$ be the horizontal slabs intersected by $R$. 
$R$ also intersects at least one 
vertical line and avoids the nearest points to these vertical lines in all $H_l$, $i+1 \leq l \leq m-1$. Therefore $R$ cannot take  points from any such $H_l$. Thus $R$ can only take points from $H_i$ and $H_m$ and hence contains at most ${2n\over
x}$ points of $P$.  
\vspace{0.1in}

Now add an $\epsilon_j^\mathcal{R}$-net for points in each horizontal slab and an
$\epsilon_k^\mathcal{R}$-net for points in each vertical slab and add these $\epsilon$-net points
to $Q$. Now $|Q|\leq 2(x-2)(y-1)+jx+ky$ and the result follows.

\end{proof} 

\begin{lem}
$\epsilon_3^\mathcal{R}\leq {9\over16}$; $\epsilon_5^\mathcal{R}\leq {15\over32}$;
$\epsilon_7^\mathcal{R}\leq{3\over 7}$; $\epsilon_9^\mathcal{R}\leq{5\over13}$
\end{lem}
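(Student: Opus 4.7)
The plan is to derive each bound by specializing either Theorem~\ref{onept} or Theorem~\ref{grid} with a carefully chosen parameter split, using the base values $\epsilon_0^{\mathcal{R}}=1$ (an empty net vacuously satisfies the definition at $\epsilon=1$), $\epsilon_1^{\mathcal{R}}=3/4$ from Theorem~\ref{ddim}, and $\epsilon_2^{\mathcal{R}}\le 5/8$ from Lemma~\ref{lemma2RU}. For each $i\in\{3,5,7,9\}$, I search for a split $(x,y)$ satisfying $2(x+y)+1=i$ and $x\ge y$ for which both arguments of the $\max$ in Theorem~\ref{onept} are at most the claimed fraction.

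For $i=3$ I would take $x=1,\,y=0$, so $z=\lfloor 1/2\rfloor=0$; the bound becomes $\max\!\left(\tfrac{3}{4}\epsilon_1^{\mathcal{R}},\,\tfrac{\epsilon_0^{\mathcal{R}}\epsilon_0^{\mathcal{R}}}{2\epsilon_0^{\mathcal{R}}}\right)=\max(9/16,\,1/2)=9/16$. For $i=5$, take $x=2,\,y=0,\,z=1$: $\max\!\left(\tfrac{3}{4}\cdot\tfrac{5}{8},\,\tfrac{1\cdot 3/4}{1+3/4}\right)=\max(15/32,\,3/7)=15/32$. For $i=7$, take $x=3,\,y=0,\,z=1$: $\max\!\left(\tfrac{3}{4}\cdot\tfrac{9}{16},\,\tfrac{1\cdot 3/4}{1+3/4}\right)=\max(27/64,\,3/7)=3/7$, where now the second term is the binding one (since $27/64<3/7$).

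For $i=9$ no split based only on the already-known values of $\epsilon_j^{\mathcal{R}}$ with $j\le 3$ reaches $5/13$, so the main obstacle is a preliminary bound $\epsilon_4^{\mathcal{R}}\le 1/2$. I would obtain this from Theorem~\ref{grid} with $x=4,\,y=2,\,j=k=0$: the grid construction contributes only the $2(x-2)(y-1)=4$ ``nearest-to-the-dividing-line'' points and yields $\epsilon_4^{\mathcal{R}}\le\max(\tfrac{2}{4},\,\tfrac{1}{2})=1/2$. Feeding this into Theorem~\ref{onept} with $x=4,\,y=0,\,z=2$ gives a net of size $9$ satisfying $\epsilon_9^{\mathcal{R}}\le\max\!\left(\tfrac{3}{4}\cdot\tfrac{1}{2},\,\tfrac{1\cdot 5/8}{1+5/8}\right)=\max(3/8,\,5/13)=5/13$. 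Apart from this one stepping-stone, the remaining work is routine arithmetic once the right split is chosen.
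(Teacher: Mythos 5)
Your proposal is correct and follows exactly the paper's route: Theorem~\ref{onept} with $(x,y)=(1,0),(2,0),(3,0),(4,0)$ respectively, using $\epsilon_0^{\mathcal{R}}=1$, and all your arithmetic checks out. The only difference is that you explicitly supply the stepping-stone $\epsilon_4^{\mathcal{R}}\le 1/2$ via Theorem~\ref{grid} with $x=4$, $y=2$, $j=k=0$ — which is precisely how the paper obtains it in the lemma immediately following, so you have merely made explicit a dependency the paper leaves implicit.
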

\begin{proof}
The results follow from the fact that $\epsilon_0^\mathcal{R}=1$ and Theorem~\ref{onept} with $x=1, y=0$ for
$\epsilon_3^\mathcal{R}$; $x=2, y=0$ for
$\epsilon_5^\mathcal{R}$; $x=3, y=0$ for $\epsilon_7^\mathcal{R}$; $x=4,y=0$ for
$\epsilon_9^\mathcal{R}$.

\end{proof}
\begin{lem}
$\epsilon_4^\mathcal{R}\leq{1\over2}$; $\epsilon_8^\mathcal{R}\leq {2\over5}$;
$\epsilon_{10}^\mathcal{R}\leq{3\over8}$
\end{lem}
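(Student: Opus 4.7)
My plan is to derive all three inequalities as direct applications of Theorem~\ref{grid}, using the base values $\epsilon_0^\mathcal{R}=1$ and $\epsilon_1^\mathcal{R}=\tfrac{3}{4}$ (from Theorem~\ref{ddim} with $d=2$). The only real work is picking the parameters $(x,y,j,k)$ so that the index $2(x-2)(y-1)+jx+ky$ hits the target value $4$, $8$, or $10$ while keeping $\max\!\bigl(\tfrac{2\epsilon_j^\mathcal{R}}{x},\tfrac{\epsilon_k^\mathcal{R}}{y}\bigr)$ small.

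For $\epsilon_4^\mathcal{R}$, I would take $(x,y,j,k)=(4,2,0,0)$. Then $2(x-2)(y-1)+jx+ky = 2\cdot 2\cdot 1 = 4$, and the bound gives $\max\bigl(\tfrac{2\cdot 1}{4},\tfrac{1}{2}\bigr)=\tfrac{1}{2}$. Geometrically this corresponds to slicing $P$ with three horizontal and one vertical line into a $4\times 2$ grid, then adding the $2(x-2)(y-1)=4$ ``boundary'' points prescribed by the theorem.

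For $\epsilon_8^\mathcal{R}$, I would take $(x,y,j,k)=(5,2,0,1)$. The index is $2\cdot 3\cdot 1 + 0 + 2 = 8$, and the bound is $\max\bigl(\tfrac{2}{5},\tfrac{3/4}{2}\bigr)=\max\bigl(\tfrac{2}{5},\tfrac{3}{8}\bigr)=\tfrac{2}{5}$. For $\epsilon_{10}^\mathcal{R}$, I would take $(x,y,j,k)=(4,2,1,1)$. The index is $2\cdot 2\cdot 1 + 4 + 2 = 10$, and the bound is $\max\bigl(\tfrac{2\cdot 3/4}{4},\tfrac{3/4}{2}\bigr)=\max\bigl(\tfrac{3}{8},\tfrac{3}{8}\bigr)=\tfrac{3}{8}$.

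The ``hard part'' here is essentially bookkeeping: checking that these parameter tuples satisfy $x,y\ge 2$ and $j,k\ge 0$ (they do), and verifying that no other simple choice does better. For example, for $\epsilon_4^\mathcal{R}$ one might be tempted to try $(x,y)=(2,2)$ with $j+k=2$, but plugging in $\epsilon_1^\mathcal{R}=\tfrac{3}{4}$ or $\epsilon_2^\mathcal{R}\le\tfrac{5}{8}$ yields bounds worse than $\tfrac{1}{2}$; similarly $(x,y)=(4,2)$ dominates $(3,2)$ and $(4,3)$ at index $10$. So the proof is just three lines, each instantiating Theorem~\ref{grid} with the tuple above.
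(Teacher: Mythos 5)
Your proposal is correct and uses exactly the same argument as the paper: instantiating Theorem~\ref{grid} with $(x,y,j,k)=(4,2,0,0)$, $(5,2,0,1)$, and $(4,2,1,1)$ respectively, together with $\epsilon_0^\mathcal{R}=1$ and $\epsilon_1^\mathcal{R}=\frac{3}{4}$; the index and bound computations all check out. The paper's proof is the same three-line instantiation, so there is nothing further to add.
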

\begin{proof}
The results follow from the fact that $\epsilon_0^\mathcal{R}=1$ and Theorem~\ref{grid} with $x=4, y=2,j=k=0$ for
$\epsilon_4^\mathcal{R}$; with $x=5,y=2,j=0,k=1$ for $\epsilon_8^\mathcal{R}$;
$x=4,y=2,j=1,k=1$ for $\epsilon_{10}^\mathcal{R}$.

\end{proof}
\subsection{Lower Bounds on $\epsilon_i^{\mathcal{R}}$}
In this subsection, we call an axis-parallel rectangle $R$ an $\alpha$-big rectangle if $\vert P \cap R \vert \geq \alpha \vert P \vert$. Let $Q$ be an $\epsilon$-net and $P_1 \subset P$. We call $P_1$  as free if $P_1 \cap Q = \emptyset$.
\begin{lem}\label{lemma2RL}
 $\epsilon^{\cal{R}}_2 \geq \frac{5}{9}$
\end{lem}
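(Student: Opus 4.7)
The plan is to construct an explicit configuration $P$ of $n$ points in $\mathbb{R}^2$ which witnesses that no 2-point subset of $P$ is a strong $\epsilon$-net for $\epsilon < 5/9$, thereby establishing $\epsilon_2^{\mathcal{R}} \geq 5/9$. A direct application of Theorem~\ref{genlower} with $j=k=1$ only yields $\epsilon_2^{\mathcal{R}} \geq \frac{(3/4)^2}{3/2} = 3/8$, so the argument must go beyond the recursive construction.

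I would take $P$ to be the disjoint union of $9$ clusters $C_1,\ldots,C_9$, each of size $n/9$, placed at $9$ carefully chosen positions in the plane. The ratio $5/9$ suggests exactly this blocking: since a strong 2-net $Q=\{p,q\}$ touches at most two clusters, leaving seven clusters $\alpha$-big-free, an axis-parallel rectangle enclosing five of these seven free clusters would contain $5 \cdot (n/9) = 5n/9$ points of $P$ and avoid $Q$. The construction must therefore satisfy the following combinatorial--geometric property: for every 2-element subset $T \subseteq \{C_1,\ldots,C_9\}$, there exists a 5-element subset $S \subseteq \{C_1,\ldots,C_9\}\setminus T$ whose axis-parallel bounding box contains no cluster in $T$ (equivalently, no cluster outside $S$). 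Such a 5-subset is ``rectangle-realizable'' by a rectangle slightly larger than its bounding box, and this rectangle is a witness to $Q$ not being an $\alpha$-net for any $\alpha < 5/9$.

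The execution then has two parts. First, specify the 9 positions of the clusters explicitly; the configuration must be rich enough to admit many rectangle-realizable 5-subsets, yet symmetric enough to make verification tractable. Second, verify the covering property by case analysis on the $\binom{9}{2} = 36$ pairs of clusters, grouping them into equivalence classes under the symmetries of the configuration and exhibiting an explicit witnessing rectangle in each class.

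The main obstacle is choosing the correct 9 positions. Naive candidates — the $3 \times 3$ grid, points on a convex curve (where only contiguous subsets are rectangle-realizable), or two translated copies of the $\epsilon_1^{\mathcal{R}}$ lower bound — all fail, since some pairs (typically ``central'' or ``middle-row'' ones) force any avoiding rectangle to cover at most $4$ clusters. The hard part is finding an arrangement for which every pair, especially these ``central'' ones, can be escaped by a 5-cluster rectangle, and verifying the escape explicitly.
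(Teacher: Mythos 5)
Your proposal is a strategy outline, not a proof: the entire content of this lemma is an explicit point configuration together with a verification that every $2$-point subset is beaten by a rectangle with $\frac{5n}{9}$ points, and you supply neither — you state yourself that finding the nine cluster positions is the unresolved ``main obstacle.'' Moreover, the specific scheme you commit to (nine equal atoms of $\frac{n}{9}$ points, and for every pair of atoms a rectangle containing five full free atoms while missing the two touched atoms) is not just unfinished but severely constrained, possibly infeasible. Here is a concrete obstruction: let $m$ be the point with median $x$-coordinate among your nine atoms. Any rectangle avoiding $m$ lies strictly to the left, right, above or below $m$; the left and right sides contain only four atoms each, so every escape for a pair $\{m,p\}$ must lie above or below $m$ and must pick up five atoms there. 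If the larger of the two vertical sides of $m$ contains only $4$, $5$ or $6$ atoms, one checks (taking $p$ on the majority side, non-extreme in all four axis directions when the majority has six atoms, so that $p$ lies in the bounding box of the remaining five) that some pair $\{m,p\}$ has no escape. Hence the $x$-median atom must have at least seven of the other eight atoms strictly above or strictly below it, and symmetrically the $y$-median atom must have at least seven of the others on one horizontal side; also no two atoms may both be among the middle three in $x$ and the middle three in $y$. This is why the $3\times3$ grid, convex position, etc.\ fail, and it shows the missing construction is not a routine detail that symmetry plus a $36$-case check will produce.

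The paper's proof avoids this difficulty entirely and is structurally different from your plan: it uses only three groups of $\frac{n}{3}$ points each, spread \emph{uniformly along segments} (figure~\ref{fig2RL}), so that a net point deletes a single location rather than an $\frac{n}{9}$-chunk, and the escaping rectangles are allowed to capture \emph{fractional} parts of groups. The verification is then a short case analysis (cases (a)--(d), according to which groups the two net points come from) in which the maximum fraction $f$ captured by an avoiding rectangle is written in terms of continuous parameters $x,y$ describing where the net points split their groups, and minimizing $f$ over $x,y$ gives exactly $\frac{5}{9}$ in every case (e.g.\ a full group plus one third of each of two others). If you want to salvage your discretization idea, you would essentially have to place many small clusters along such segments — at which point you are reproducing the paper's construction and would still owe the arrangement of the segments and the case analysis. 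As it stands, the proposal has a genuine gap at its core.
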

\begin{figure}
\begin{center}
 \includegraphics[scale=0.5]{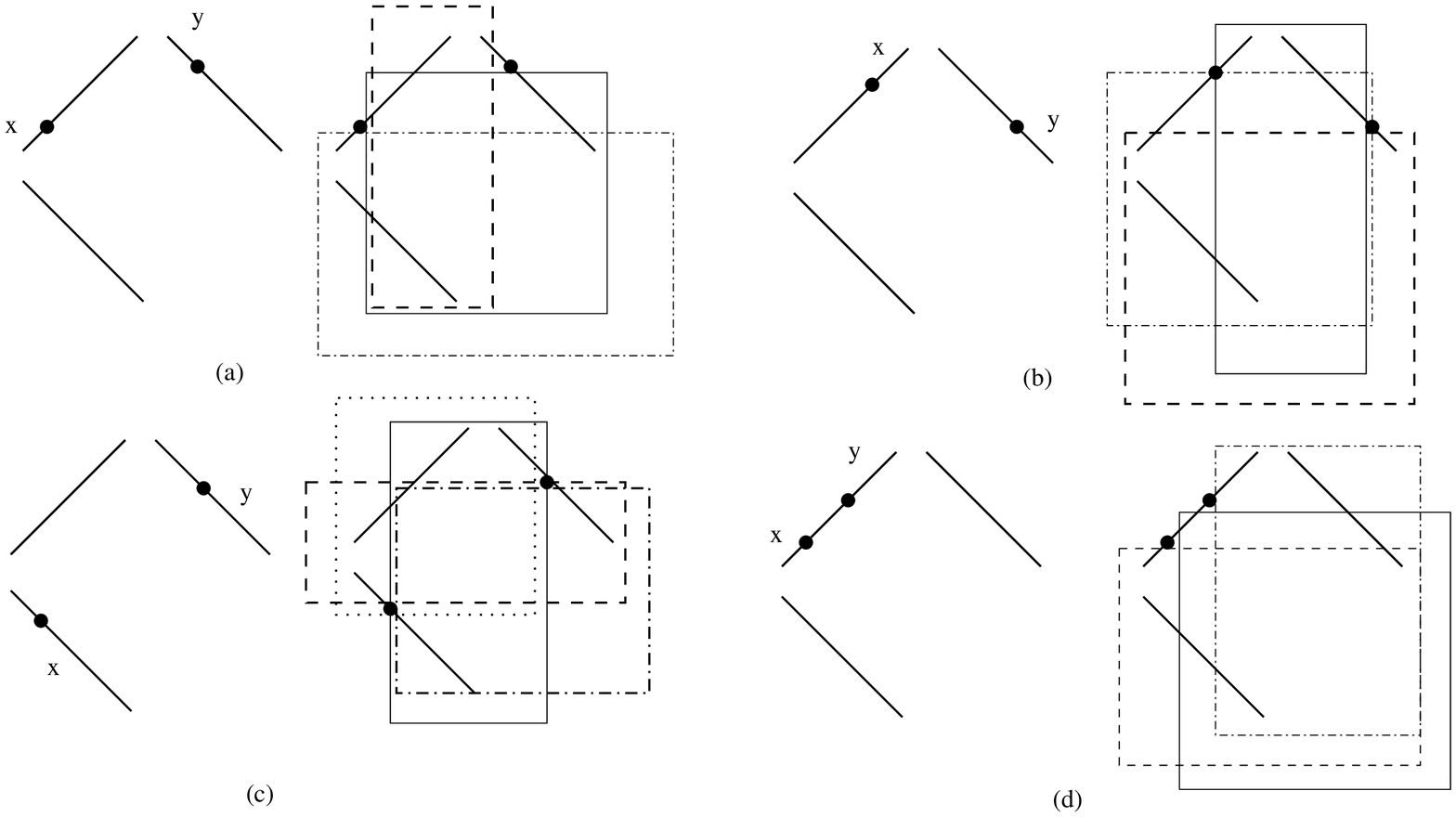}
\caption{Lower bound for $\epsilon^{\cal{R}}_2$}
\label{fig2RL}
\end{center}
\end{figure}
\begin{proof}
 Divide the $n$ points into three equal subsets of $\frac{n}{3}$
points each and place each subset uniformly
on the bold segments as shown in figure~\ref{fig2RL}. Let $Q$ be an
$\epsilon$-net of size two. The different cases of choosing
$Q$ from this set are shown in figure~\ref{fig2RL}. Let $x, y$ denote the fraction of
points from a subset which lie on one side of the point in $Q$ as shown in the
corresponding figure for each case. Let $f$ be a function that represents the
maximum fraction of points of $P$ 
that an axis-parallel rectangle contains without
containing any of the points in $Q$. In each case, we compute the values of 
$x$ and $ y $ that minimize $f$ by considering all the rectangles that avoids the $\epsilon$-net points.

\begin{enumerate}[(a)]
 \item $f=max(\frac{2x}{3}+\frac{1}{3}, 1-\frac{2(x+y)}{3},
\frac{2(1-x)}{3}+\frac{y}{3})$. $f$ is minimized when
$x=y=\frac{1}{3}$, which results in $f=\frac{5}{9}$.
 \item $f=max(\frac{2y}{3}+\frac{1}{3}, 1-\frac{2x+y}{3},
\frac{2x}{3}+\frac{1-y}{3})$. $f$ is minimized when
$x=\frac{1}{2}, y=\frac{1}{3}$, which results in $f=\frac{5}{9}$.
 \item $f=max(1-\frac{x+y}{3}, \frac{2x}{3}+\frac{1-y}{3},
\frac{2y}{3}+\frac{1-x}{3},\frac{2(x+y)-1}{3})$. $f$ is minimized when
$x=y=\frac{2}{3}$, which results in $f=\frac{5}{9}$.
 \item $f=max(\frac{1}{3}+\frac{2x}{3}, \frac{1}{3}+\frac{2y}{3},
1-\frac{2(x+y)}{3})$. $f$ is minimized when
$x=y=\frac{1}{3}$, which results in $f=\frac{5}{9}$.
\end{enumerate}
So there always exists an axis-parallel rectangle that avoids all the points in the $\epsilon$-net
and contains at least $\frac{5n}{9}$ points.

\end{proof}

\begin{figure}
\begin{center}
 \scalebox{0.25}{\input{r3lb_new.pstex_t}}
\caption{Lower bound for $\epsilon^{\cal{R}}_3$}
\label{fig3RL}
\end{center}
\end{figure}
\vspace{0.5in}
\begin{lem}\label{lemma3RL}
 $\epsilon^{\cal{R}}_3 \geq \frac{2}{5}$ 
\end{lem}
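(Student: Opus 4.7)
The plan is to exhibit, as in figure~\ref{fig3RL}(a), an explicit point configuration for which every three-point subset fails to be an $\epsilon$-net for $\epsilon < 2/5$. Mimicking the style of Lemma~\ref{lemma2RL}, I would place $n/4$ points uniformly on each of four line segments $A, B, C, D$ arranged to form a diamond (with $A$ upper-right, $B$ lower-right, $C$ lower-left, $D$ upper-left), with endpoints $a_1, a_5$; $b_1, b_5$; $c_1, c_5$; $d_1, d_5$. The crucial feature of the diagonal placement is that an axis-parallel rectangle covering a prefix or suffix of one diagonal naturally extends into the two adjacent diagonals, so that small rectangles can accumulate points from several segments at once.

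For any three-point net $Q \subset P$, I would first invoke pigeonhole: since $|Q|=3<4$, at least one segment, call it $F$, contains no point of $Q$ and is entirely free, contributing $n/4$ points to any rectangle covering it. By the four-fold symmetry of the construction we may assume $F=A$, so the three net points all lie on $B \cup C \cup D$. The analysis then splits into the sub-cases shown in panels (b) and (c) of figure~\ref{fig3RL}, indexed by how the three net points are distributed among $B,C,D$ (three-on-one-segment, two-and-one, and one-on-each). In each sub-case I would parametrise the fractions of each hit segment lying on either side of its net point(s) by real numbers $x,y,z \in [0,1]$, list the candidate axis-parallel rectangles that avoid $Q$, and read off a function $f(x,y,z)$ equal to the maximum normalised size of such a rectangle. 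Minimising $f$ over $x,y,z$ in each sub-case and verifying that the minimum is always at least $2/5$ completes the proof.

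The main obstacle will be the symmetric ``one-on-each'' distribution, which gives the adversary the most freedom and forces the optimal placement of $Q$ to simultaneously balance several competing candidate rectangles at $2n/5$; the function $f$ there is a maximum of several linear expressions whose minimiser must be pinpointed carefully, exactly as in cases (a)--(d) of Lemma~\ref{lemma2RL}. The remaining sub-cases are simpler: whenever two net points lie on the same segment, a second segment is also free by pigeonhole, and the axis-parallel rectangle covering two suitably chosen free segments already contains at least $n/2 \geq 2n/5$ points. Overall the argument parallels Lemma~\ref{lemma2RL}, promoted from three subsets with a two-point net to four subsets with a three-point net.
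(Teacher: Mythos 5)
Your configuration is essentially the paper's: the paper discretizes the same diamond into $20$ equal clusters ($5$ per quadrant) and argues combinatorially, while you keep the mass continuous and plan a parametric minimization. The genuine gap is in your dismissal of the ``two-and-one'' distribution. You claim that whenever two net points lie on one segment, a second segment is free and ``the axis-parallel rectangle covering two suitably chosen free segments already contains at least $n/2$ points.'' This fails precisely when the two net points lie on one segment and the third lies on the \emph{diagonally opposite} segment (say two on $B$, one on $D$): then the only free segments are $A$ and $C$, which are diagonally opposite, and any axis-parallel rectangle containing both $A$ and $C$ contains their bounding box, which is the bounding box of the entire diamond and hence contains all three net points. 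So no rectangle realizes your ``$n/2$'' shortcut in this sub-case. This is exactly the configuration the paper isolates as Case~2 of Lemma~\ref{lemma3RL} and handles with a separate forcing argument (the single point must sit near the middle of $D$, the two points must sit near the two ends of $B$, and then an explicit $\frac{2}{5}$-big rectangle avoiding them is exhibited). In the continuous model the case can be rescued, but it needs its own computation, e.g.\ playing off the ``top'' corner rectangle, the ``bottom-left'' corner rectangle, and the full-height vertical slab strictly between the two net points on $B$: keeping all three below $\frac{2}{5}$ forces both $b_2-b_1>\frac{4}{5}$ and $b_2-b_1<\frac{4}{5}$, a contradiction. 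As written, though, your argument simply has no valid step covering this case.

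A secondary weakness: in the main one-point-per-segment case you defer the whole verification (``minimising $f$ \ldots completes the proof''). For the bound to come out to $\frac{2}{5}$ the candidate list must include not only corner rectangles (all of $A$ plus a prefix of $B$ and a suffix of $D$, etc.) but also full horizontal and vertical slabs whose edges pass between the coordinates of the net points; with corner rectangles alone one can place the three points so that every such rectangle stays below $\frac{2n}{5}$, so the slab candidates are not optional. The paper avoids this continuous optimization entirely by its cluster bookkeeping (``any eight consecutive subsets lie in a $\frac{2}{5}$-big rectangle''), which is why its Case~1 and Case~2 analyses are short; if you complete your route, make the candidate enumeration explicit, since that is where the content of the proof lies.
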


\begin{proof}
Let $P$ be arranged into 20 equal subsets. Each quadrant has a group of five subsets, we will denote these groups as 
$A,B,C,D$ as shown in figure~\ref{fig3RL}.


Clearly, all the subsets of one of the groups will be free. Assume that all the subsets
in $A$ are free. The three points in $Q$ can be chosen in two ways.
\begin{itemize}
 \item \emph{Case 1: One point is chosen from each of the other three groups:}
Any set of eight consecutive subsets is contained in a $\frac{2}{5}$-big axis-parallel
rectangle. Since subsets in $A$ are free, all the three subsets near $A$ in $B$ and $D$
cannot be free. This means a point has to be chosen from one of the subsets
$b_1, b_2$ or $b_3$. Now if a
point is chosen from subset $b_1$ then the remaining subsets in $B$ are free and there exists
a $\frac{2}{5}$-big axis-parallel rectangle containing all those subsets and $a_1,a_2,a_3,a_4$ from $A$.
Therefore $b_1$ has to be free. Symmetrically, $d_5$ also has to be free. Clearly two points have to be selected from the subsets $b_2$ and $d_4$
to avoid the $\frac{2}{5}$-big axis-parallel
rectangle with consecutive subsets from $D,B$ and $A$ ($d_5,a_1,\dots,a_5,b_1,b_2$ and $d_4,d_5,a_1,\dots,a_5,b_1$). If the subset $c_1$ is free then there is a $\frac{2}{5}$-big axis-parallel rectangle containing the free subsets 
$d_5,a_1,a_2,a_3,b_3,b_4,b_5,c_1$.
 Therefore the three points have to be chosen from the subsets
$b_2,c_1,d_4$. Now there exists a $\frac{2}{5}$-big axis-parallel rectangle as shown in figure~\ref{fig3RL}(b) that avoids these subsets.
\item \emph{Case 2: Two points are chosen from a group and the third point is chosen
from the diagonally opposite group:}
Assume that two points are chosen from $B$
and one point from $D$. Clearly, a point has to be chosen from the subset $d_3$ to avoid the $\frac{2}{5}$-big
axis-parallel rectangles with consecutive subsets from $D,A$ and $D,C$ ($d_3,d_4,d_5,a_1,\dots,a_5$ and $c_1,\dots,c_5,d_1,d_2,d_3$).
 The other subsets in $D$ are free. The next two points
chosen have to be from the subsets $b_1$ and $b_5$ to avoid the $\frac{2}{5}$-big axis-parallel
rectangles containing consecutive subsets from $B,C,D$ ($b_5,c_1,\dots,c_5,d_1,d_2$) and $D,A,B$ ($d_4,d_5,a_1,\dots,a_5,b_1$). Now
there exists a $\frac{2}{5}$-big axis-parallel rectangle as shown in figure~\ref{fig3RL}(c)
which avoids the subsets $b_1,b_5,d_3$.
\end{itemize}

\end{proof}

\begin{lem}\label{lemma4RL}
 $\epsilon^{\cal{R}}_4 \geq \frac{3}{10}$
\end{lem}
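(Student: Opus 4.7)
The plan is to mimic the proof of Lemma~\ref{lemma3RL}: construct an explicit configuration of equal subsets arranged in a ring across the four quadrants, and then argue by case analysis on the placement of the four chosen net points that some $\tfrac{3}{10}$-big axis-parallel rectangle avoids $Q$. Note that directly invoking Theorem~\ref{genlower} with $j=k=2$ only yields $\epsilon_4^{\mathcal R}\ge \tfrac{5}{18}$, and with $j=1,k=3$ only $\tfrac{6}{23}$, both strictly weaker than $\tfrac{3}{10}$, so a fresh construction is required.

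First I would arrange $P$ into $m$ equal subsets placed along the inner boundary of a large axis-parallel square, grouped into four ``quadrant-groups'' $A,B,C,D$, calibrated so that any block of $\lceil 3m/10\rceil$ cyclically consecutive subsets is enclosed by some $\tfrac{3}{10}$-big axis-parallel rectangle but no shorter block is. A natural first choice is to reuse the $20$ subsets and $4$ groups of $5$ from Lemma~\ref{lemma3RL}, in which case a $\tfrac{3}{10}$-big rectangle is one containing $6$ consecutive subsets; if that turns out to be too tight under the case analysis, I would double the granularity (e.g.\ $40$ subsets with $12$-block rectangles) so that ``half-group'' behaviour can be argued.

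Next I would split into cases according to how the four points of $Q$ are distributed among $A,B,C,D$, namely the partitions $(4,0,0,0)$, $(3,1,0,0)$, $(2,2,0,0)$, $(2,1,1,0)$ and $(1,1,1,1)$ (up to the rotational/reflective symmetries of the construction). For each partition, the ``empty'' groups are entirely free, and the $\tfrac{3}{10}$-big corner-spanning rectangles used in the proof of Lemma~\ref{lemma3RL} (spanning one empty group plus the end of an adjacent group) force the points in the non-empty groups to sit at very specific positions, exactly as $b_1,b_5,d_3$ were forced in that proof. Once these forcing constraints are extracted, in each case I would exhibit a single explicit $\tfrac{3}{10}$-big axis-parallel rectangle that avoids all four forced positions, mirroring figures~\ref{fig3RL}(b)--(c).

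The hard part will be the $(1,1,1,1)$ case, since the adversary's flexibility is maximal there: every group keeps four (or more) free subsets, and we must locate a run of at least $\lceil 3m/10\rceil$ cyclically consecutive free subsets no matter where the four representatives sit. I expect this will require showing that the forcing constraints (from the corner-spanning $\tfrac{3}{10}$-big rectangles) pin each point to the central position of its group, after which a ``half-group plus adjacent half-group'' rectangle gives the required $\tfrac{3}{10}n$ free points. Getting the numerics to land exactly at $\tfrac{3}{10}$ -- rather than some nearby fraction -- is the delicate design step, which is why I would keep the subset count $m$ as a tunable parameter until the $(1,1,1,1)$ case is balanced against the asymmetric cases.
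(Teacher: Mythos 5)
Your plan is not a proof yet, and its central mechanism cannot be made to work. You propose a cyclic arrangement of $m$ equal subsets around the four quadrants, with the winning rectangles being those that contain roughly $\lceil 3m/10\rceil$ cyclically consecutive subsets, and you correctly identify the $(1,1,1,1)$ case as the hard one --- but that case is not merely delicate, it is fatal to this framework. If the adversary places its four net points in four subsets that are (nearly) evenly spaced around the ring, every maximal run of consecutive free subsets has length about $m/4 - 1$, which is strictly less than $3m/10$ for every $m$; since $\tfrac14 < \tfrac{3}{10}$, no choice of granularity $m$ fixes this. So as long as the only $\tfrac{3}{10}$-big rectangles your construction provides are those covering consecutive runs, four points always suffice to stab all of them, and no amount of ``forcing'' from corner-spanning rectangles can pin the four representatives away from an evenly spread position. (This is also why the ring of Lemma~\ref{lemma3RL} works at $\tfrac25$ with three points --- there $\tfrac13 < \tfrac25$ --- but does not transfer to four points at $\tfrac{3}{10}$.)

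The paper escapes this obstruction by using a configuration that is not a simple ring: ten equal subsets, one at the top, one at the bottom, and the remaining eight arranged as two quadrilaterals in between (figure~\ref{fig4RL}). The key extra ingredient is that the $\tfrac{3}{10}$-big rectangles there join subsets that are not cyclically consecutive: any three free subsets of one quadrilateral lie in such a rectangle (forcing two net points into each quadrilateral, hence leaving the top and bottom subsets free), and then the free top/bottom subsets combine with specific interior (black, then gray) subsets across the configuration to force the four points into positions that still leave one gray subset free together with a $\tfrac{3}{10}$-big rectangle through it. If you want to salvage your approach, you must likewise design rectangles whose point sets are unions of geometrically ``far apart'' groups, so that four stabbing points cannot be spread evenly; with only consecutive-run rectangles the bound you can hope for is $\tfrac14$, not $\tfrac{3}{10}$. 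Your observation that Theorem~\ref{genlower} only gives $\tfrac{5}{18}$ is correct, but the construction you outline does not close the gap.
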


\begin{proof}
Let $P$ be arranged as ten equal subsets as shown in figure~\ref{fig4RL}. The eight
subsets in between the topmost and bottommost subsets are arranged as two
quadrilaterals. Let $Q$ be an $\epsilon$-net of size four. We
claim that there always exists a $\frac{3}{10}$-big
axis-parallel rectangle that avoids $Q$. 

It is easy to see that there can be at most two free subsets in each of the
quadrilaterals since there is a $\frac{3}{10}$-big axis-parallel rectangle that contains
any three free subsets from a quadrilateral. Hence all the four points of $Q$ have to be chosen from the  subsets forming the two quadrilaterals(two points from each quadrilateral). Therefore the top and bottom
subsets are free. Both these subsets form a $\frac{3}{10}$-big axis-parallel
rectangle with the black subsets in the quadrilateral(see figure~\ref{fig4RL}). Therefore two of the points in $Q$ have to be chosen from the black subsets. The
two gray subsets in the lower(resp. upper) half form a $\frac{3}{10}$-big axis-parallel
rectangle with the bottommost(resp. topmost) subset. Hence the remaining two points in $Q$ have to be chosen from the gray subsets(one point from the top gray subsets and one from the bottom gray subsets). Thus one gray subset in the
lower half is free which forms a $\frac{3}{10}$-big axis-parallel rectangle that avoids $Q$ as shown
in figure~\ref{fig4RL}.

\end{proof}
\begin{figure}
\begin{minipage}[b]{160pt}

\begin{center}
 \includegraphics[scale=0.5]{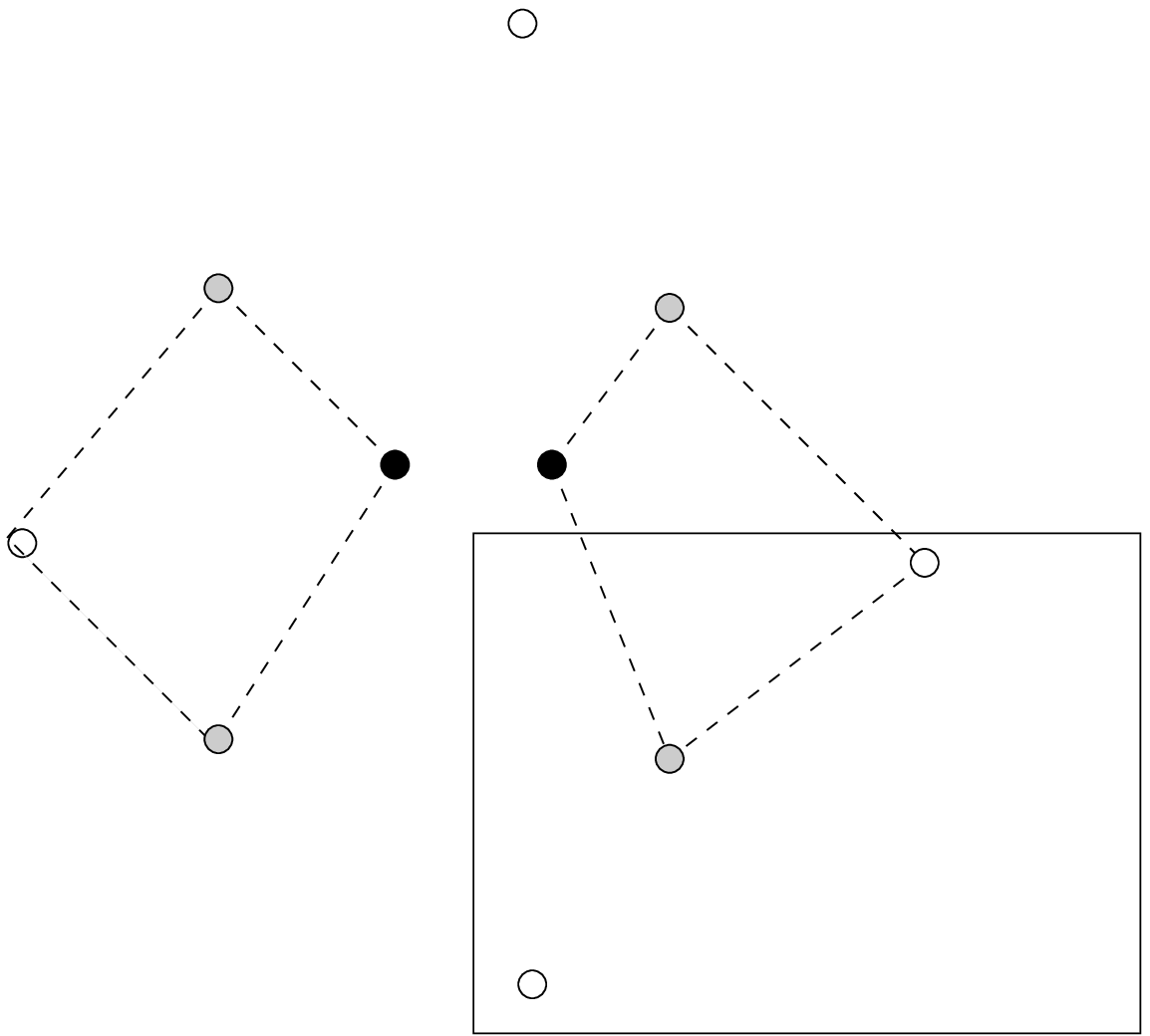}
\caption{Lower bound for $\epsilon^{\cal R}_4$}
\label{fig4RL}
\end{center}

\end{minipage}
\begin{minipage}[b]{200pt}
 \begin{center}
 \includegraphics[scale=0.5]{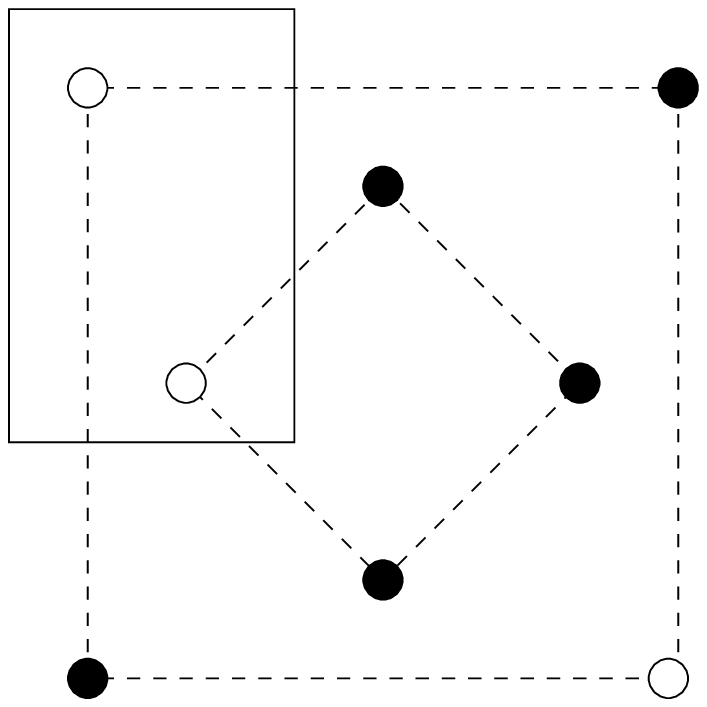}
\caption{Lower bound for $\epsilon^{\cal R}_5$}
\label{fig5lb}
\end{center}
\end{minipage}

\end{figure}

\begin{lem}
 $\epsilon^{\cal{R}}_5 \geq \frac{1}{4}$ 
\end{lem}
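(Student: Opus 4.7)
The plan is to mimic the style of Lemma~\ref{lemma3RL} and Lemma~\ref{lemma4RL}: construct a point set $P$ as a union of equal-sized subsets placed at carefully chosen positions in the plane so that many triples or pairs of "free" (i.e., net-avoiding) subsets span an axis-parallel rectangle of weight at least $\tfrac{n}{4}$. I would take $n$ to be divisible by a suitable integer (around $12$, so each subset has weight $\tfrac{n}{12}$, and any three free subsets laid out along compatible rows/columns form a $\tfrac{1}{4}$-big rectangle), and arrange the subsets in a symmetric pattern built from two quadrilateral chains plus extremal top and bottom subsets, extending the configuration of Lemma~\ref{lemma4RL}.

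The proof itself will proceed by assuming $Q\subset P$, $|Q|=5$, and aiming to exhibit a $\tfrac{1}{4}$-big axis-parallel rectangle disjoint from $Q$. First I would catalog a collection of $\tfrac{1}{4}$-big rectangles each of which must be pierced by $Q$; the combinatorial constraints these impose (each rectangle forces a point of $Q$ in one of a small set of "critical" subsets) will, together with $|Q|=5$, force $Q$ into a near-unique pattern. In the style of Lemma~\ref{lemma4RL}, I would show that the extremal subsets are free, then argue that points of $Q$ must be distributed across both quadrilaterals, and use the color-coded "black/gray" style of argument to pin down which critical subsets each net point must come from.

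A case split will then handle the remaining distributions of the five points across the subset groups: if more than two points fall in any single group, a full "opposite" side of the configuration remains free and yields a $\tfrac{1}{4}$-big rectangle; otherwise, the forced positions within critical subsets leave some specifically designed triple of subsets unpunctured, again forming a $\tfrac{1}{4}$-big rectangle. At most a handful of symmetric subcases should need to be checked, since the construction will be chosen with reflective symmetry about both axes.

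The main obstacle is the design of the configuration itself: the threshold $\tfrac{1}{4}$ is tighter than in the previous lemmas, so more $\tfrac{1}{4}$-big rectangles have to be simultaneously arranged, and the subset coordinates must be chosen so that each forced placement of a net point creates a new uncovered $\tfrac{1}{4}$-big rectangle elsewhere. Balancing these competing constraints so that no choice of five points covers all $\tfrac{1}{4}$-big rectangles, while keeping the subset arrangement symmetric enough that the case analysis collapses to a manageable number of cases, is the delicate part. Once the configuration is right, the verification mirrors Lemma~\ref{lemma4RL} almost mechanically.
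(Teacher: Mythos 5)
Your proposal is a plan rather than a proof: for a lower-bound statement like this one, the entire mathematical content is the explicit point configuration together with the verification that every choice of five net points leaves some $\frac{1}{4}$-big axis-parallel rectangle unpierced, and you supply neither. You yourself identify the design of the configuration as ``the delicate part'' and leave it open, so there is nothing yet that establishes $\epsilon_5^{\mathcal{R}} \geq \frac{1}{4}$. Moreover, the parameters you sketch point in a harder direction than necessary: with twelve subsets of $\frac{n}{12}$ points you would need every relevant rectangle to swallow three free subsets simultaneously, which multiplies the geometric constraints you must satisfy and the cases you must check, and you give no evidence that such a configuration exists.

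The paper's construction is considerably leaner. It uses eight equal subsets of $\frac{n}{8}$ points arranged in two concentric layers of four (an inner layer and an outer layer), so that a rectangle covering just \emph{two} suitably placed free subsets already contains $\frac{n}{4}$ points. The counting then goes quickly: at least three of the four inner subsets must receive a net point (otherwise two free inner subsets lie in a common $\frac{1}{4}$-big rectangle), and at least two of the outer subsets must receive net points, necessarily from diagonally opposite positions (to kill the rectangles containing two consecutive outer subsets). That already accounts for all five points of $Q$, and the remaining free inner subset together with the nearest free outer subset is contained in a $\frac{1}{4}$-big rectangle avoiding $Q$. If you want to salvage your approach, replace the twelve-subset, triple-based design with a two-layer, pair-based design of this kind; as written, the key object of your argument is missing.
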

\begin{proof}
Let $P$ be
arranged as eight equal subsets, four in the outer layer and four in the
inner layer (see figure~\ref{fig5lb}). Let $Q$ be an $\epsilon$-net of size five. We claim that there always exists a $\frac{1}{4}$-big axis-parallel rectangle that avoids $Q$.  

It is clear that we have to choose a point each from at least three of the inner
layer subsets, otherwise there can be an axis-parallel rectangle that contains
all the points of two free subsets. Also, a point each has to be chosen from
at least two of the outer subsets to avoid the axis-parallel rectangles
that contains all the points of two consecutive subsets in the outer layer. 
Moreover, these two points are to be chosen from diagonally opposite subsets. Let the $\epsilon$-net points be chosen from the five black subsets as shown in figure~\ref{fig5lb}. Now there exists a $\frac{1}{4}$-big axis-parallel rectangle containing the free subset in the
inner layer and the nearest free subset in the outer layer.

\end{proof}

\begin{lem}
$\epsilon_6^{\mathcal{R}} \ge \frac{1}{5}$; $\epsilon_7^{\mathcal{R}} \ge
\frac{5}{29}$; $\epsilon_8^{\mathcal{R}} \ge \frac{2}{13}$;
$\epsilon_9^{\mathcal{R}} \ge \frac{3}{22}$; $\epsilon_{10}^{\mathcal{R}} \ge
\frac{1}{8}$
\end{lem}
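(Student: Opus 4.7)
The plan is to derive each of these five lower bounds as a direct application of the recursive lower bound of Theorem~\ref{genlower}, namely
\[
\epsilon_{j+k}^{\mathcal{R}} \;\ge\; \frac{\epsilon_j^{\mathcal{R}}\,\epsilon_k^{\mathcal{R}}}{\epsilon_j^{\mathcal{R}} + \epsilon_k^{\mathcal{R}}},
\]
using the bounds $\epsilon_2^{\mathcal{R}}\ge 5/9$, $\epsilon_3^{\mathcal{R}}\ge 2/5$, $\epsilon_4^{\mathcal{R}}\ge 3/10$, $\epsilon_5^{\mathcal{R}}\ge 1/4$ already established in the previous lemmas. The task is essentially to pick, for each $i\in\{6,7,8,9,10\}$, the partition $i=j+k$ (with $1\le j,k\le 5$) whose corresponding harmonic expression matches the claimed target.

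First, I would enumerate the candidate splits. For $i=6$, taking $j=k=3$ gives $\tfrac{1}{2}\epsilon_3^{\mathcal{R}} \ge \tfrac{1}{5}$, while the other splits $(1,5)$ and $(2,4)$ yield weaker bounds ($3/16$ and $15/77$), so $(3,3)$ is optimal. For $i=7$, the split $(2,5)$ gives $\frac{(5/9)(1/4)}{5/9+1/4}=\tfrac{5}{29}$, which beats $(3,4)$ giving $\tfrac{6}{35}$. For $i=8$, the split $(3,5)$ gives $\frac{(2/5)(1/4)}{2/5+1/4}=\tfrac{2}{13}$, which beats $(4,4)$ giving $3/20$. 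For $i=9$, the split $(4,5)$ gives $\frac{(3/10)(1/4)}{3/10+1/4}=\tfrac{3}{22}$. For $i=10$, the split $(5,5)$ gives $\tfrac{1}{2}\epsilon_5^{\mathcal{R}} \ge \tfrac{1}{8}$.

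Having identified these splits, the proof of the lemma is then a single sentence for each case: by Theorem~\ref{genlower} applied to the indicated $(j,k)$, together with the stated lower bound on $\epsilon_j^{\mathcal{R}}$ and $\epsilon_k^{\mathcal{R}}$, the claimed inequality follows. Since Theorem~\ref{genlower} already works for compact convex objects (and axis-parallel rectangles fit this framework after suitably bounding the point sets in a large box), no additional geometric construction is needed.

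There is no real obstacle here; the only mildly nontrivial point is verifying that the harmonic mean is maximized by the splits I chose (so that the bounds are as strong as the recursion allows), but this is a handful of easy arithmetic checks. The lemma is essentially a corollary of Theorem~\ref{genlower} combined with the four preceding lemmas.
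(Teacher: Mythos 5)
Your proposal is correct and follows exactly the paper's own proof: the paper derives these bounds from Theorem~\ref{genlower} with the same splits $(j,k)=(3,3),(2,5),(3,5),(4,5),(5,5)$ and the previously established bounds $\epsilon_2^{\mathcal{R}}\ge 5/9$, $\epsilon_3^{\mathcal{R}}\ge 2/5$, $\epsilon_4^{\mathcal{R}}\ge 3/10$, $\epsilon_5^{\mathcal{R}}\ge 1/4$. Your extra check that these splits maximize the harmonic-mean bound is a nice bonus but not needed for the lemma.
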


\begin{proof}
 The results follow from theorem~\ref{genlower} with $j=k=3$ for
$\epsilon_6^{\mathcal{R}}$;
$j=2, k=5$ for $\epsilon_7^{\mathcal{R}}$; $j=3, k=5$ for
$\epsilon_8^{\mathcal{R}}$; $j=4, k=5$ for $\epsilon_9^{\mathcal{R}}$;
$j=5, k=5$ for $\epsilon_{10}^{\mathcal{R}}$. 
\end{proof}

\begin{lem}
 $\epsilon_i^{\mathcal{R}} \ge \frac{10}{9i}$, for $i \ge 2$
\end{lem}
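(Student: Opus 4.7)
The plan is to use strong induction on $i$, with Theorem~\ref{genlower} as the recursive engine, and the established value $\epsilon_2^{\mathcal{R}} \ge \tfrac{5}{9}$ from Lemma~\ref{lemma2RL} as the main driver. The key observation is that the function $g(x,y) = \tfrac{xy}{x+y}$ is monotonically non-decreasing in each argument, so inductive lower bounds on $\epsilon_2^{\mathcal{R}}$ and $\epsilon_{i-2}^{\mathcal{R}}$ can be plugged directly into Theorem~\ref{genlower}.

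For the base cases I would handle $i=2$ and $i=3$ separately. For $i=2$, Lemma~\ref{lemma2RL} gives $\epsilon_2^{\mathcal{R}} \ge \tfrac{5}{9}$, which equals $\tfrac{10}{9\cdot 2}$ exactly, so the claim is met with equality. For $i=3$, Lemma~\ref{lemma3RL} gives $\epsilon_3^{\mathcal{R}} \ge \tfrac{2}{5}$, and a direct numerical comparison shows $\tfrac{2}{5} \ge \tfrac{10}{27}$. Both base cases are therefore immediate; note that the trivial split $3 = 1+2$ via Theorem~\ref{genlower} is \emph{not} strong enough here (it only yields $\tfrac{15}{47} < \tfrac{10}{27}$), which is exactly why the lemma excludes $i=1$ and requires a separate argument for $i=3$.

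For the inductive step, assume $i \ge 4$ and that the claim holds for all smaller indices $\ge 2$. Split $i = 2 + (i-2)$ with $i-2 \ge 2$, so the inductive hypothesis applies to both terms. Apply Theorem~\ref{genlower}:
\[
\epsilon_i^{\mathcal{R}} \;\ge\; \frac{\epsilon_2^{\mathcal{R}}\,\epsilon_{i-2}^{\mathcal{R}}}{\epsilon_2^{\mathcal{R}}+\epsilon_{i-2}^{\mathcal{R}}}\;\ge\;\frac{\frac{5}{9}\cdot\frac{10}{9(i-2)}}{\frac{5}{9}+\frac{10}{9(i-2)}},
\]
where the second inequality uses monotonicity of $g$. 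A routine simplification (clear the common factor $9(i-2)$ from numerator and denominator of the inner fraction) collapses the right-hand side to exactly $\tfrac{10}{9i}$, closing the induction.

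There is no real obstacle here; the only subtlety is recognizing that the base case $i=3$ must be pulled from Lemma~\ref{lemma3RL} rather than from Theorem~\ref{genlower}, because $\epsilon_1^{\mathcal{R}} = \tfrac{3}{4}$ is too weak (the bound $\tfrac{10}{9i}$ simply does not hold at $i=1$). Once $i=2$ and $i=3$ are in hand, the recursion $j=2,\; k=i-2$ propagates the bound cleanly for all larger $i$, and the algebraic identity that makes the induction tight is precisely that $\tfrac{10}{9i}$ is the unique fixed-point shape compatible with $\epsilon_2^{\mathcal{R}} = \tfrac{5}{9}$ under the recurrence of Theorem~\ref{genlower}.
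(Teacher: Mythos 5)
Your proposal is correct and follows essentially the same route as the paper: induction on $i$ with base cases $i=2,3$ taken from Lemmas~\ref{lemma2RL} and~\ref{lemma3RL}, and the inductive step given by Theorem~\ref{genlower} with the split $i=2+(i-2)$. Your appeal to monotonicity of $\frac{xy}{x+y}$ is just a repackaging of the paper's reciprocal manipulation ($\frac{9}{10}k+\frac{9}{5}\ge\frac{1}{\epsilon_k^{\mathcal{R}}}+\frac{1}{\epsilon_2^{\mathcal{R}}}$), so there is no substantive difference.
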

\begin{proof}
We use mathematical induction to prove this.
The lemma holds for $i=2$ and $i=3$ since $\epsilon_2^{\mathcal{R}} \ge
\frac{5}{9} \ge \frac{10}{9 \times 2}$ (by lemma~\ref{lemma2RL}) and
$\epsilon_3^{\mathcal{R}} \ge \frac{2}{5} \ge \frac{10}{9 \times 3}$ (by lemma~\ref{lemma3RL}). Now assume the lemma holds for $i=k$,
i.e., $\epsilon_k^{\mathcal{R}} \ge \frac{10}{9k}$ or $\frac{9}{10}k \ge
\frac{1}{\epsilon_k^{\mathcal{R}}}$.
Thus, $\frac{9}{10}k + \frac{9}{5} \ge
\frac{1}{\epsilon_k^{\mathcal{R}}} + \frac{1}{\epsilon_2^{\mathcal{R}}}$, 
i.e., $\frac{9}{10} (k+2) \ge \frac{\epsilon_k^{\mathcal{R}} +
\epsilon_2^{\mathcal{R}}}{\epsilon_k^{\mathcal{R}} \epsilon_2^{\mathcal{R}}}$.
Therefore, $\epsilon_{k+2}^{\mathcal{R}} \ge \frac{\epsilon_k^{\mathcal{R}}
\epsilon_2^{\mathcal{R}}}{\epsilon_k^{\mathcal{R}} + \epsilon_2^{\mathcal{R}}} 
\ge \frac{10}{9(k+2)}$ (from theorem~\ref{genlower}).
Hence the result follows for all values of $i$.

\end{proof}

\begin{table}[t]\label{rectable}
\begin{centering}

\begin{tabular}{|c|c|c|c|c|c|c|c|c|c|c|}
\hline
&$\epsilon_{1}$&$\epsilon_{2}$&$\epsilon_{3}$&$\epsilon_{4}$&$\epsilon_{5}
$&$\epsilon_{6}$&$\epsilon_{7}$&$\epsilon_{8}$&$\epsilon_{9}$&$\epsilon_{10}$\\
\hline
LB&3/4&5/9&2/5&3/10&1/4&1/5&5/29&2/13&3/22&1/8\\
\hline
UB&3/4&5/8&9/16&1/2&15/32&15/32&3/7&2/5&5/13&3/8\\
\hline
\end{tabular}
\caption{Summary of lower and upper bounds for $\epsilon_{i}^{\mathcal{R}}$.
}
\end{centering}
\end{table} 

\noindent The summary of lower and upper bounds for $\epsilon_{i}^{\mathcal{R}}$ is given in table 1.

\section{Half spaces}\label{hp}

\noindent Let $P$ be a set of $n$ points in $\mathbb{R}^2$ and $\mathcal{H}$ be the family of halfspaces. In this section, we prove bounds on $\epsilon_i^\mathcal{H}$.

\begin{lem} \label{hslemma}
  $\epsilon_i^\mathcal{H}\leq \frac{2}{i+1}$
\end{lem}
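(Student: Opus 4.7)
Plan: The stated bound $\epsilon_i^{\mathcal{H}} \le 2/(i+1)$ is the classical strong $\epsilon$-net theorem for halfspaces in $\mathbb{R}^2$ of Komlós, Pach and Woeginger~\cite{KPW92}, which guarantees that every $n$-point set in the plane admits a strong $\epsilon$-net for halfspaces of size at most $\lceil 2/\epsilon\rceil-1$. Substituting the net size $i$ for $\lceil 2/\epsilon\rceil-1$ immediately rewrites this as $\epsilon \le 2/(i+1)$, which is exactly the lemma. So the first order of business is simply to invoke this bound; the rest of the plan is to re-derive it in case one wants the proof self-contained.

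To re-derive the bound, I would pass to the dual plane. Under the standard point--line duality, each $p\in P$ becomes a line $p^{\ast}$, and the arrangement $\mathcal{A}$ of the $n$ dual lines encodes all halfspaces of $P$: a halfspace with $|H\cap P|=m$ corresponds to a point in the dual plane at \emph{level} $m$, i.e.\ a point lying above exactly $m$ of the lines of $\mathcal{A}$. A strong $\epsilon$-net of size $i$ in $P$ is then precisely a choice of $i$ lines from $\mathcal{A}$ whose upper envelope lies at or above every dual point of level greater than $\epsilon n$. Setting $k=\lceil 2n/(i+1)\rceil$, the task reduces to covering everything above the $k$-level of $\mathcal{A}$ by the upper envelope of $i$ arrangement lines.

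I would carry this out by a greedy left-to-right sweep along the $k$-level (which is an $x$-monotone polygonal chain). Start at the leftmost vertex of the $k$-level; among the lines of $\mathcal{A}$ passing above the current sweep point, pick the one that stays above the $k$-level for the largest $x$-extent; advance the sweep to the point where this line first drops below the $k$-level, and iterate. The main obstacle is extracting the sharp constant $2/(i+1)$ rather than a loose $O(1/i)$: this needs a charging argument showing that each selected line is responsible for at least $n/(i+1)$ units in a natural counting measure along the $k$-level (for instance, the number of arrangement lines that cross the chosen line within its covering span), with the leftmost and rightmost boundary contributions together providing the extra ``$+1$'' in the denominator. This sharp accounting, which exploits the monotonicity of the $k$-level and the concavity of the upper envelope, is the crux of the KPW argument.
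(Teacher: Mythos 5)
Your first paragraph is fine as far as the bare inequality goes: the paper itself attributes this bound to \cite{KPW92}, whose existential theorem (a strong $\epsilon$-net for halfplanes of size $\lceil 2/\epsilon\rceil-1$) applied with $\epsilon=\frac{2}{i+1}$ yields a net of size $i$, hence $\epsilon_i^{\mathcal{H}}\le\frac{2}{i+1}$. Be aware, though, that the paper's proof does more than cite: it turns the existential argument into an explicit primal construction --- a greedy clockwise walk on the convex hull of $P$, adding a hull vertex $c_z$ whenever the halfplane bounded by the line through the current net point $x$ and $c_{z+1}$ and containing $c_z$ has more than $\frac{2}{i+1}n$ points --- and bounds the net size by noting that each chosen point $n_j$ has a witness halfplane $H_j$ (bounded by $\overline{n_{j-1}n_{j+1}}$) with $h_j=\vert H_j\cap P\vert>\frac{2n}{i+1}$, while every point of $P$ lies in at most two of the $H_j$, so $\sum_j h_j\le 2n$ forces at most $i$ net points.

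Your attempted self-contained re-derivation, however, has a genuine gap. First, the dual reduction as stated captures only one orientation of halfplanes: under point--line duality a halfplane containing $m$ points of $P$ corresponds to a dual point with $m$ dual lines on one specified side of it, so ``cover everything above the $k$-level by the upper envelope of the chosen lines'' handles only the upper (or only the lower) halfplanes. The symmetric condition with the other envelope is also required, and the two sides must share the budget of $i$ lines; treating them independently is exactly where the sharp constant $\frac{2}{i+1}$ would be lost. Second, and more importantly, the step that actually produces that constant --- your charging claim that each greedily chosen line accounts for at least $\frac{n}{i+1}$ units of some measure along the $k$-level --- is precisely the content of the KPW counting argument (in the paper's primal version: each witness halfplane has more than $\frac{2n}{i+1}$ points and each point is counted at most twice), and you explicitly defer it rather than prove it. As written, your sweep gives only an $O(1/i)$-type bound, not $\frac{2}{i+1}$; supplying that accounting, or simply carrying out the primal hull walk as the paper does, is what is missing.
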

\begin{proof}
 The proof of this lemma is similar to that of Theorem 4.1 in \cite{KPW92}. For $i=2$, \cite{KPW92} gives a constructive proof to show that $\epsilon_2^\mathcal{H}\leq \frac{2}{3}$. For $i \geq 3$, \cite{KPW92} gives an existential proof showing that a strong $\epsilon$-net of size $\lceil \frac{2}{\epsilon} \rceil - 1$ exists. Based on this proof, we give an explicit construction for $\frac{2}{i+1}$-net of size $i$.

Let $c_1,c_2,\dots,c_{m}$ be the vertices of the convex hull of $P$, in clockwise order. Let $N$ represent the $\epsilon$-net. We will denote the chosen $\epsilon$-net points as $n_1,n_2,\dots$. Initially $N = \{ c_1 \}$ and $x = c_1$.

Let $c_z$ be the first vertex encountered in the clockwise direction such that the halfspace formed by $\overline{xc_{z+1}}$ and containing $c_z$ has $>\frac{2}{i+1} n$ points. Add $c_z$ to $N$ and set $x = c_z$. Repeat this procedure till we have considered all vertices of the convex hull i.e, we have added the point $n_{k+1}$ to $N$ such that $n_{k+1}$ is $c_1$ or the halfspace $\overline{n_{k}n_{k+1}}$ contains $c_1$. In the latter case, there are two possibilities:
\begin{itemize}
 \item $n_{k+1}$ = $n_2$: Now $n_1$ is redundant and is removed from $N$. Rename $n_k$ as $n_1$.
\item  $n_{k+1}$ lies between $n_1$ and $n_2$ : $n_{k+1} $ is a redundant point and is removed from $N$.
\end{itemize}

Clearly any halfspace that avoids $N$ has at most $\frac{2}{i+1} n$ points. Hence $N$ is a $\frac{2}{i+1}$-net. We claim that $N$ has at most $i$ points. 

Let $N$ have $l$ points and let $H_j$ be the halfspace formed by the line $\overline{n_{j-1} n_{j+1}}$ and containing $n_j$. Let $h_j=\vert H_j \cap P \vert$. From the above construction, we know that $h_j> \frac{2n}{i+1}$ for all $j$, $1\leq j \leq l$. Therefore $\sum\limits_{j=1}^{l} h_j > \frac{2l}{i+1} n$. Since any point in $P$ is present in at most two $H_j$'s, $\sum\limits_{j=1}^{l} h_j \le 2n$. Combining the above two inequalities, we have $l < i+1$ and the result follows.

%

\end{proof}
\begin{lem}\cite{KPW92}
\begin{equation*}
 \noindent \epsilon_i^{\mathcal{H}}\ge
\begin{cases}
  \frac{2}{i+1} &\text{for $i$,odd}\\
 \frac{2}{i+2} &\text{for $i$,even}
\end{cases}
\end{equation*}

\end{lem}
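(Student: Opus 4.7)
My plan is to build a single point configuration $P$ that witnesses both bounds. Set $m = \lfloor i/2 \rfloor + 1$, so that $\tfrac{1}{m} = \tfrac{2}{i+1}$ for odd $i$ and $\tfrac{1}{m} = \tfrac{2}{i+2}$ for even $i$; this $\tfrac{1}{m}$ is the target lower bound. Place $m$ tight clusters $C_1,\ldots,C_m$ of size $n/m$ each, one near each vertex $v_1,\ldots,v_m$ of a regular $m$-gon centered at the origin. The essential feature, which I will exploit in a moment, is that each $C_j$ is not a generic tight cloud but a small arc of $n/m$ points facing the polygon's interior, say $C_j \subset \{v_j - \delta(\cos\theta,\sin\theta) : \theta \in [-\alpha,\alpha]\}$ in a local frame at $v_j$ where ``inward'' is $(-1,0)$, with $\delta,\alpha > 0$ chosen arbitrarily small.

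Fix any $N \subset P$ with $|N| = i$. Because $i < 2m$ in both parity cases, pigeonhole gives some cluster $C_j$ with $|N \cap C_j| \le 1$. I will exhibit a halfspace $H$ disjoint from $N$ with $|H \cap P| \ge n/m - 1$, which after letting $n \to \infty$ delivers the stated bound on $\epsilon_i^{\mathcal{H}}$.

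If $N \cap C_j = \emptyset$, I take $H$ with outward normal $\hat v_j$ and offset tuned so that $H$ contains $C_j$ but no other vertex; since $v_j$ is the unique maximizer of $\hat v_j \cdot v_l$ over the regular $m$-gon, this works and gives $|H \cap P| = n/m$. If $N \cap C_j = \{p\}$, I take $\vec u$ to be the outward normal of $\mathrm{conv}(C_j)$ at $p$; by the inward-arc construction $\vec u$ points from $v_j$ toward the polygon's interior, so $\vec u \cdot v_j$ strictly exceeds $\vec u \cdot v_l$ for every $l \ne j$ by a positive gap independent of $\delta$. I then pick the offset $c$ of $H = \{\vec u \cdot x \le c\}$ strictly between $\max_{q \in C_j \setminus \{p\}} \vec u \cdot q$ and $\vec u \cdot p$, a nonempty interval because $p$ uniquely maximizes $\vec u \cdot$ on $C_j$. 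Then $H$ excludes $p$, contains $C_j \setminus \{p\}$, and excludes every other $C_l$, giving $|H \cap P| \ge n/m - 1$.

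The main point to verify is the compatibility in the second case: the $O(\delta)$-shift of $c$ needed to exclude exactly $p$ must not accidentally sweep any other $v_l$ into $H$. This is precisely where the inner-facing arc matters—it forces the outward normal $\vec u$ at every $p \in C_j$ to point toward the polygon's center rather than outward, so the other vertices $v_l$ land firmly on the excluded side with a gap of order $1$ that swamps the order-$\delta$ adjustments of $c$. I expect the rest of the argument to be routine bookkeeping on constants.
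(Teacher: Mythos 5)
Your construction is essentially the Koml\'os--Pach--Woeginger lower-bound configuration that the paper simply cites: $m=\lfloor i/2\rfloor+1$ clusters in convex position, each a tiny inward-curved arc, pigeonhole giving a cluster with at most one net point, and a halfplane then cutting off that cluster minus that point; the paper derives the even case from the odd one via $\epsilon_i^{\mathcal{H}}\ge\epsilon_{i+1}^{\mathcal{H}}$ while you handle both parities with one construction, but the route is the same. One sentence is stated backwards: with $\vec u$ pointing from $v_j$ toward the centre you have $\vec u\cdot v_j<\vec u\cdot v_l$ for all $l\ne j$ (not the reverse), and it is precisely this inequality that puts the other clusters on the excluded side $\{\vec u\cdot x>c\}$ of your halfplane; since the rest of your argument (and your closing paragraph) uses the correct version, this is a sign slip rather than a gap.
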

\begin{proof}

The result follows from the lower bounds given in~\cite{KPW92}. For odd values of $i$, consider a point set divided into $\frac{i+1}{2}$ equal subsets, as described in~\cite{KPW92}. For any $\epsilon$-net $N$ of size $i$, there exists a halfspace that avoids $N$ and contains all the $\frac{2}{i+1} n$ points from one subset. For even values, the lower bound for $\epsilon_i^{\mathcal{H}}$ follows from that for $\epsilon_{i+1}^{\mathcal{H}}$.
\end{proof}

\begin{figure}
\begin{centering}
\includegraphics[scale=0.25]{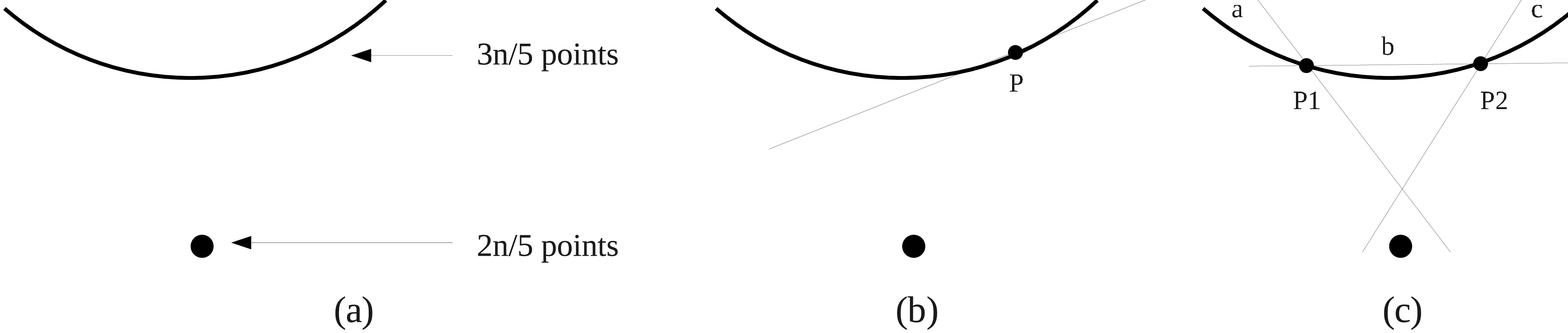}
\caption{Lower bound for $\epsilon^{\cal{H}}_2$}
\label{fig2HL}
\end{centering}
\end{figure}
\begin{lem}\label{lemma2HL}
$\epsilon_2^{\mathcal{H}} \geq \frac{3}{5}$
\end{lem}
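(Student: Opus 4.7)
The plan is to construct an $n$-point configuration $P$ such that for every $2$-element subset $Q \subset P$, some halfspace avoiding $Q$ contains at least $3n/5 - O(1)$ points of $P$; letting $n \to \infty$ then yields $\epsilon_2^{\mathcal{H}} \geq 3/5$.

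I would place five equal subsets $C_1, \dots, C_5$, each of size $n/5$, near the vertices of a regular pentagon, with each $C_i$ arranged as a small point set in convex position (so every point of $C_i$ is an extreme point of $\mathrm{conv}(C_i)$). Given any $Q = \{q_1, q_2\}$, the analysis splits into three cases based on which subsets contain the $q_j$'s. If $q_1, q_2$ lie in the same $C_i$, the halfspace that shaves off these two extreme points of $C_i$ contains $n - 2 \geq 3n/5$ points of $P$. If $q_1, q_2$ lie in two pentagon-adjacent subsets, then the three consecutive subsets on the opposite side fit into a single halfspace of size exactly $3n/5$ that avoids $Q$.

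The main case is when $q_1, q_2$ lie in two non-adjacent subsets, say $q_1 \in C_1$ and $q_2 \in C_3$. Here I would consider three candidate halfspaces, namely one containing $C_4 \cup C_5 \cup (C_1 \setminus \{q_1\})$, the mirror one containing $C_4 \cup C_5 \cup (C_3 \setminus \{q_2\})$, and one containing $C_2 \cup (C_1 \setminus \{q_1\}) \cup (C_3 \setminus \{q_2\})$. Each of these, when realisable as an open halfspace, contains at least $3n/5 - 2$ points of $P$. The main obstacle is to show that regardless of where inside $C_1$ and $C_3$ the adversary places $q_1$ and $q_2$, at least one of these three halfspaces admits a separating line between the points it claims to enclose and $Q$. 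I would handle this via a sub-case analysis on the angular position of each $q_i$ within its cluster (\emph{inward-facing} vs.\ \emph{outward-facing} with respect to the pentagon's centroid), using the supporting line of $C_i$ at the extreme point $q_i$ together with the five-fold symmetry of the configuration to limit the number of distinct sub-cases.
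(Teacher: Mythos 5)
Your first two cases are essentially fine (although the claim that a halfspace can shave off exactly the two points when $q_1,q_2$ lie in the same cluster is only literally true when they are consecutive hull vertices of $C_i$; the halfspace containing the three clusters disjoint from $C_i$ repairs this at no cost). The gap is in the main case, and it is not merely an unfinished case analysis: the statement you need there is false for this construction. To realise the candidate containing $C_4\cup C_5\cup(C_1\setminus\{q_1\})$ you need one line that simultaneously cuts $q_1$ off from the rest of $C_1$, keeps $C_4\cup C_5$ inside, and keeps $q_2$ (sitting at the vertex of the pentagon occupied by $C_3$) outside. Because the clusters are tiny, the last two requirements force the direction of that line into a fixed arc: its outer normal must lie in an arc of width $36^{\circ}$ (determined by the rays from vertex $1$ to vertices $3$ and $4$). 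Hence this candidate exists only if $q_1$ is extreme in $C_1$ in some direction of that $36^{\circ}$ arc; symmetrically, the mirror candidate requires $q_2$ to be extreme in $C_3$ in another fixed $36^{\circ}$ arc, and the third candidate forces the cutting line to pass through both clusters, so it requires $q_1$ and $q_2$ to be extreme essentially in the single direction normal to the chord joining vertices $1$ and $3$. But the outer normal cones of the $n/5$ convex-position points of $C_1$ tile the whole circle of directions, so all but a small fraction of the points of $C_1$ are extreme only in directions outside this ``useful'' set, and the same holds for $C_3$. An adversary who picks such a $q_1$ and such a $q_2$ kills all three candidates: realising any of them then forces the line to shave off a constant fraction, $\Theta(n)$, of a cluster, not $O(1)$ points.

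No other halfspace rescues the bound either: any halfspace avoiding $q_1$ and $q_2$ cannot contain points of both $C_2$ and $C_4$, since the segment $q_1q_2$ crosses every segment joining $C_2$ to $C_4$; consequently a halfspace with $3n/5-o(n)$ points must be of one of your three types, and for the adversarial choice above every halfspace avoiding $Q$ has at most $(3/5-c)n$ points for some constant $c>0$. In other words, this pentagon configuration actually admits a two-point net strictly better than a $\frac35$-net, so it cannot certify $\epsilon_2^{\mathcal{H}}\geq\frac35$. The paper's construction is deliberately asymmetric to avoid exactly this trap: it places $\frac{3n}{5}$ points on a circular arc of angle less than $\pi$ and $\frac{2n}{5}$ points far away, so that the tangent at \emph{any} point of the arc separates that point from the rest of the arc and from the far group. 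Then a single net point on the arc is always cuttable by its tangent (leaving $\frac{3n}{5}-1$ points), while two net points on the arc split it into three sub-arcs, one of which has at least $\frac{n}{5}$ points and, together with the far $\frac{2n}{5}$ points, is captured by a halfspace avoiding both. To make your approach work you would have to concentrate $\frac{3n}{5}$ of the mass so that every single point of it can be cut off in an admissible direction, rather than spreading the mass over five symmetric clusters.
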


\begin{proof}Let $P$ be a set of $n$ points arranged as a subset $S$ of $\frac{3n}{5}$ points uniformly placed on the arc of a circle with angle less than $\pi$ and a subset $T$ of $\frac{2n}{5}$ points placed close together, at a sufficiently large distance from $S$ so that the tangent to the arc at any point $p \in S$ has $S \backslash \{p\}$ and $T$ on different sides of it (see figure~\ref{fig2HL}(a)).

If only one point in the $\epsilon$-net is chosen from $S$, the tangent to the arc passing through it results in a halfspace
containing $\frac{3n}{5}-1$ points (figure~\ref{fig2HL}(b)). So assume that both the $\epsilon$-net points are chosen from $S$. Let $p_1$ and $p_2$ be the $\epsilon_2$-net points chosen from $S$. The points in $S$ are divided into three arcs by $p_1$ and $p_2$. One of the arcs contain at least $\frac{n}{5}$ points. There exists a halfspace avoiding $p_1$ and $p_2$ that contains all the points of this arc and all  the points of $T$(see figure~\ref{fig2HL}(c)). 
\end{proof}

\section{Disks}\label{disk}
\noindent In this section, we consider the family of disks. We show an upper bound for $ \epsilon_2^{\mathcal{D}}$. The proof is motivated by the construction for $\epsilon_2^{\mathcal{H}}$ given in \cite{KPW92}.
\begin{thm}
$\epsilon_2^{\mathcal{D}}\leq{ 2\over3}$
\end{thm}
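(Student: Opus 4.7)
The plan is to adapt the convex-hull sweeping construction from Lemma~\ref{hslemma}. Let $c_1$ be a vertex of the convex hull of $P$, let $\ell$ be a supporting line to the convex hull at $c_1$, and let $\vec n$ be the unit normal to $\ell$ pointing into the halfplane containing $P\setminus\{c_1\}$. Set $p_1=c_1$ and consider the two-parameter family $\mathcal{F}$ of open disks whose boundaries pass through $c_1$ and whose interiors lie in the open halfplane $\{x:\langle x-c_1,\vec n\rangle>0\}$. Every $D\in\mathcal{F}$ avoids $c_1$, and $\mathcal{F}$ is partially ordered by inclusion, with the entire halfplane as a limit (achieving $|D\cap P|=n-1$).

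To identify $p_2$, I would pick an inclusion-minimal $D^\ast\in\mathcal{F}$ with $|D^\ast\cap P|>\tfrac{2n}{3}$; by minimality, $\partial D^\ast$ contains $c_1$ and at least one other point of $P$, and I would take $p_2$ to be such a boundary point. To verify that $\{c_1,p_2\}$ is a $\tfrac{2}{3}$-net, consider an arbitrary disk $D$ with $c_1,p_2\notin D$. The argument, mirroring Lemma~\ref{hslemma}, is to continuously deform $D$ into some $D'\in\mathcal{F}$ with $D\subseteq D'$ and $p_2\notin D'$; then by the minimality of $D^\ast$, any $D'\in\mathcal{F}$ avoiding $p_2$ satisfies $|D'\cap P|\le\tfrac{2n}{3}$, so $|D\cap P|\le|D'\cap P|\le\tfrac{2n}{3}$.

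The main obstacle is realizing the deformation from an arbitrary $D$ into $\mathcal{F}$ while simultaneously keeping $p_2$ outside the moving disk. A general disk $D$ avoiding $c_1$ may straddle $\ell$, and $\mathcal{F}$ is constrained to one side of $\ell$, so a naive expansion of $D$ may cross $p_2$ before reaching $\mathcal{F}$. Handling this likely requires either (i) a case analysis on the position of $D$ relative to $\ell$, or (ii) a richer sweep that allows $\mathcal{F}$-disks on either side of $\ell$, together with a combinatorial argument showing that the minimality of $D^\ast$ still rules out any big disk avoiding $\{c_1,p_2\}$. The counting bound underlying this minimality---the analog of $\sum_j h_j\le 2n$ in the proof of Lemma~\ref{hslemma}---is what pins down the constant $\tfrac{2}{3}$ and constitutes the main technical step.
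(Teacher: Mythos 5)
Your construction does not prove the theorem: both of its load-bearing steps fail as stated. First, the minimality argument. As defined, $\mathcal{F}$ is either (i) the disks internally tangent to $\ell$ at $c_1$ (requiring the interior to lie in one open halfplane of $\ell$ forces tangency at $c_1$), which is a one-parameter nested chain, or (ii) all open disks with $c_1$ on the boundary, which is genuinely two-parameter but only \emph{partially} ordered by inclusion. In reading (ii), an inclusion-minimal $D^\ast$ with more than $\frac{2n}{3}$ points says nothing about disks of $\mathcal{F}$ incomparable with $D^\ast$; such a disk can avoid $p_2$ and still be big, so the inference ``any $D'\in\mathcal{F}$ avoiding $p_2$ has at most $\frac{2n}{3}$ points'' does not follow. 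In reading (i) the chain logic is sound, but then the deformation step you flag as the main obstacle is not a technicality --- it is essentially the whole theorem, and it is false in the form you need: you would need every disk $D$ with $c_1,p_2\notin D$ to satisfy $D\cap P\subseteq D'$ for some tangent disk $D'$ avoiding $p_2$, i.e.\ essentially $D\cap P\subseteq D^\ast$; since fewer than $\frac{n}{3}$ points lie outside $D^\ast$, nothing prevents a big disk that enters the point set ``from the far side'', collects those outside points together with many points of $D^\ast$, and misses both $c_1$ and $p_2$. The structure that makes the analogous reduction work for halfspaces in Lemma~\ref{hslemma} --- a halfplane avoiding the net is trapped between two consecutive net points in the rotational order around the hull --- has no counterpart for disks, which have three degrees of freedom; note also that in Lemma~\ref{hslemma} the count $\sum_j h_j\le 2n$ bounds the \emph{size of the net}, not the number of points in an avoiding halfspace, so it is not the step that would pin down the constant $\frac{2}{3}$ here.

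The missing idea is an anchor for all big disks, and the paper obtains it from the centerpoint theorem: every convex set, hence every disk, containing more than $\frac{2n}{3}$ points of $P$ contains the centerpoint $p$. The paper takes the Delaunay triangle $abc$ containing $p$, picks the edge, say $ab$, crossed by at least one third of the segments from $p$ to the remaining points, and shows $\{a,b\}$ is a $\frac{2}{3}$-net: a big disk avoiding $a$ and $b$ must contain $p$ and some point whose segment to $p$ crosses $ab$, hence it crosses the empty circumcircle of $abc$ in two points, and a short case analysis on where these crossings lie relative to the arc between $a$ and $b$ either forces the disk to contain $a$ or $b$, or allows it to be enlarged to a big disk missing the centerpoint, a contradiction. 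To salvage a sweep-based proof you would at least need to replace the hull vertex $c_1$ by an anchor with centerpoint-like depth; a hull vertex plus one boundary point of a minimal swept disk carries no such guarantee.
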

\begin{proof}
\begin{figure}
\begin{centering}
\includegraphics[scale=0.4]{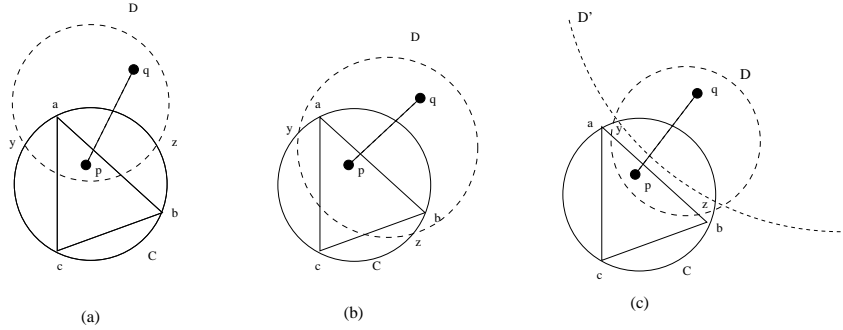}
\caption{Upper bound for $\epsilon_2^{\mathcal{D}}$}
\label{diskfig}
\end{centering}
\end{figure}
Let $P$ be a set of $n$ points and $p$ be the centerpoint of $P$. Let $T$ be the Delaunay triangulation of $P$. Let $a,b,c$ be points in $P$ such that they are the vertices of the delaunay  triangle in $T$ that contains $p$. For all $x\in P\setminus\{a,b,c\}$, consider the line segment connecting $x$ and $p$. Each of these $n-3$ line segments intersect one of the edges $ab,bc$ or $ac$ and at least one of them, say $ab$, has at least one-third of these line segments intersecting it. We claim that $\{a,b\}$ is a ${2\over3}$-net.

 Let $P_1\subset P$ be the set of points $q$ such that the line segment $pq$ intersects the edge $ab$. Consider the circumcircle $C$ of triangle $abc$ and let $C_{ab}$ be the arc between $a$ and $b$ that is intersected by the line segment $pq$, for any $q \in P_1$. By the definition of delaunay triangulation, $C$ does not contain any point of $P$. Any disk $D$ that avoids $\{a,b\}$ and contains more than $2n\over3$ points of $P$ contains the centerpoint $p$ and at least one point $q\in P_1$, since $\vert P_1 \vert \geq \frac{n-3}{3}$. Therefore $D$ intersects $C$ at two points $y,z$. If exactly one of these points lie on the arc $C_{ab}$ then $D$ contains either $a$ or $b$ (see figure~\ref{diskfig}(a)). Similarly if both $y$ and $z$ are outside the arc $C_{ab}$, then $D$ contains both the points $a$ and $b$ (see figure~\ref{diskfig}(b)). Therefore assume that both $y$ and $z$ lie on the arc $C_{ab}$. In this case, $D$ contains only points from $P_1$. Then we can have another disk $D'$ such that $(D\cap P) \subset (D'\cap P$) and $p \notin D'$ (see figure~\ref{diskfig}(c)). This implies that $D'$ contains more than $\frac{2n}{3}$ points and does not contain the centerpoint, a contradiction. 

\end{proof}

The lower bounds $\epsilon_i^{\mathcal{D}} \ge \frac{2}{i+1}$ for odd values of $i$ and $\epsilon_i^{\mathcal{D}} \ge \frac{2}{i+2}$ for even values of $i$ follow from the lower bound results for halfspaces.

\section{Lower bounds for weak epsilon nets}\label{lbweak}
 
In this section, we give general lower bounds for $\epsilon_i$ for weak $\epsilon$-nets. These bounds are improvements over the general lower bound  of $\frac{1}{i+1}$ given in \cite{AAH09}.

\noindent Let $\mathcal{C}$ represent the family of convex objects.

\begin{lem}
 $\epsilon_4^{\mathcal{C}} \ge \frac{2}{7}$; $\epsilon_5^{\mathcal{C}} \ge \frac{20}{79}$;
$\epsilon_6^{\mathcal{C}} \ge \frac{5}{22}$; $\epsilon_7^{\mathcal{C}} \ge \frac{10}{57}$; $\epsilon_8^{\mathcal{C}} \ge \frac{20}{123}$; $\epsilon_9^{\mathcal{C}} \ge \frac{5}{33}$; $\epsilon_{10}^{\mathcal{C}} \ge \frac{10}{79}$
\end{lem}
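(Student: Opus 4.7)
The plan is to bootstrap from the known small-case lower bounds for weak $\epsilon$-nets with respect to convex objects in the plane, namely $\epsilon_2^{\mathcal{C}} \ge \frac{4}{7}$ and $\epsilon_3^{\mathcal{C}} \ge \frac{5}{11}$ (both established in the prior work cited in the introduction, in particular \cite{AAH09}). Theorem~\ref{genlower} was stated with the explicit remark that its proof carries over to weak $\epsilon$-nets, so every bound claimed in the lemma will follow from a single invocation of that recursion with a suitably chosen split $i=j+k$, possibly feeding in bounds already derived at earlier stages of the lemma.

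Concretely, I would verify the seven bounds one at a time. For $\epsilon_4^{\mathcal{C}}$, take $j=k=2$, giving $\epsilon_4^{\mathcal{C}} \ge \frac{\epsilon_2^{\mathcal{C}}}{2} \ge \frac{2}{7}$. For $\epsilon_5^{\mathcal{C}}$, take $j=2,k=3$, giving $\frac{(4/7)(5/11)}{4/7+5/11} = \frac{20}{79}$. For $\epsilon_6^{\mathcal{C}}$, take $j=k=3$, giving $\frac{\epsilon_3^{\mathcal{C}}}{2} \ge \frac{5}{22}$. The bounds for $\epsilon_7^{\mathcal{C}},\epsilon_8^{\mathcal{C}},\epsilon_9^{\mathcal{C}},\epsilon_{10}^{\mathcal{C}}$ then follow by feeding the newly obtained bounds back into Theorem~\ref{genlower}: $\epsilon_7^{\mathcal{C}}$ uses $j=3,k=4$ with $\epsilon_4^{\mathcal{C}} \ge \frac{2}{7}$; $\epsilon_8^{\mathcal{C}}$ uses $j=3,k=5$ with $\epsilon_5^{\mathcal{C}} \ge \frac{20}{79}$; $\epsilon_9^{\mathcal{C}}$ uses $j=3,k=6$ with $\epsilon_6^{\mathcal{C}} \ge \frac{5}{22}$; and $\epsilon_{10}^{\mathcal{C}}$ uses the symmetric split $j=k=5$ to get $\frac{\epsilon_5^{\mathcal{C}}}{2} \ge \frac{10}{79}$. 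Each is a routine arithmetic check of the identity $\frac{xy}{x+y}$ applied to the appropriate inputs.

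The only point that requires any thought is the choice of split for each $i$. Since the map $(x,y) \mapsto \frac{xy}{x+y}$ is monotone in each argument, one should always pick the pair $(j,k)$ for which the base bounds $\epsilon_j^{\mathcal{C}}$ and $\epsilon_k^{\mathcal{C}}$ are largest. For $i=8$, for example, the symmetric split $j=k=4$ yields only $\frac{\epsilon_4^{\mathcal{C}}}{2}=\frac{1}{7}$, whereas the asymmetric split $j=3,k=5$ exploits the stronger bound on $\epsilon_5^{\mathcal{C}}$ to give the tighter $\frac{20}{123}$. There is no genuine obstacle beyond maintaining the best known bound for each smaller index and using it when building the next one.
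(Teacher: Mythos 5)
Your proof is correct and follows essentially the same route as the paper: repeated application of Theorem~\ref{genlower} to the base cases $\epsilon_2^{\mathcal{C}} \ge \frac{4}{7}$ and $\epsilon_3^{\mathcal{C}} \ge \frac{5}{11}$ (which the paper attributes to \cite{MR09}, not \cite{AAH09}). Your splits for $i=7$ ($j=3,k=4$) and $i=10$ ($j=k=5$) differ from the paper's choices ($j=2,k=5$ and $j=2,k=8$, respectively), but a routine check shows they yield exactly the same values $\frac{10}{57}$ and $\frac{10}{79}$.
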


\begin{proof}
The results follow from theorem \ref{genlower} by applying the base cases $\epsilon_2^{\mathcal{C}} \ge \frac{4}{7}, \epsilon_3^{\mathcal{C}} \ge \frac{5}{11}$ (as shown in~\cite{MR09}). Use $j=k=2$ for $\epsilon_4^{\mathcal{C}}$; $j=2, k=3$ for $\epsilon_5^{\mathcal{C}}$;
$j=k=3$ for $\epsilon_6^{\mathcal{C}}$; $j=2, k=5$ for $\epsilon_7^{\mathcal{C}}$;
$j=3, k=5$ for $\epsilon_8^{\mathcal{C}}$; $j=3, k=6$ for $\epsilon_9^{\mathcal{C}}$;
$j=2, k=8$ for $\epsilon_{10}^{\mathcal{C}}$.
\end{proof}
\begin{lem} 
 $\epsilon_i^{\mathcal{C}} \ge \frac{8}{7i}$, for $i \ge 2$
\end{lem}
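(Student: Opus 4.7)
The plan is to mirror the proof of the analogous lemma $\epsilon_i^{\mathcal{R}} \ge \frac{10}{9i}$ for rectangles, since the only structural inputs were (i) a base case at $i=2$ that matches the desired ratio and (ii) an ``incrementing'' application of Theorem~\ref{genlower} with the second index fixed to $2$. For convex objects the analogous base case is $\epsilon_2^{\mathcal{C}} \ge \tfrac{4}{7}$ from~\cite{MR09}, which, crucially, already matches the target since $\tfrac{4}{7} = \tfrac{8}{7 \cdot 2}$. This makes it plausible that an induction with step $k \to k+2$ will go through, and I would proceed by strong induction on $i$.

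For the two base cases I would separately verify $i=2$ and $i=3$. The case $i=2$ is immediate from $\epsilon_2^{\mathcal{C}} \ge \tfrac{4}{7} = \tfrac{8}{14}$. The case $i=3$ uses the other base bound from~\cite{MR09}, namely $\epsilon_3^{\mathcal{C}} \ge \tfrac{5}{11}$, and it suffices to check the numerical inequality $\tfrac{5}{11} \ge \tfrac{8}{21}$ by cross-multiplication ($105 \ge 88$). These two cases together give the induction starts for both parities, just as in the rectangle proof.

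For the inductive step, assume $\epsilon_k^{\mathcal{C}} \ge \tfrac{8}{7k}$ (for some $k \ge 2$), which rewrites as $\tfrac{1}{\epsilon_k^{\mathcal{C}}} \le \tfrac{7k}{8}$. Combining with $\tfrac{1}{\epsilon_2^{\mathcal{C}}} \le \tfrac{7}{4} = \tfrac{7 \cdot 2}{8}$ gives
\[
\frac{1}{\epsilon_k^{\mathcal{C}}} + \frac{1}{\epsilon_2^{\mathcal{C}}} \;\le\; \frac{7(k+2)}{8},
\]
i.e.\ $\dfrac{\epsilon_k^{\mathcal{C}} + \epsilon_2^{\mathcal{C}}}{\epsilon_k^{\mathcal{C}}\epsilon_2^{\mathcal{C}}} \le \dfrac{7(k+2)}{8}$. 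Applying Theorem~\ref{genlower} with $j=k$ and the second index equal to $2$, this yields $\epsilon_{k+2}^{\mathcal{C}} \ge \tfrac{\epsilon_k^{\mathcal{C}}\epsilon_2^{\mathcal{C}}}{\epsilon_k^{\mathcal{C}} + \epsilon_2^{\mathcal{C}}} \ge \tfrac{8}{7(k+2)}$, completing the induction.

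There is essentially no obstacle here; the choice of the constant $\tfrac{8}{7}$ is dictated precisely so that $\epsilon_2^{\mathcal{C}} \ge \tfrac{4}{7}$ becomes an equality in the bound, making the step $k \to k+2$ self-propagating. The only thing to double-check is the $i=3$ base case inequality, which holds with room to spare.
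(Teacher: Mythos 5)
Your proof is correct and is essentially identical to the paper's own argument: the same two base cases from~\cite{MR09} and the same induction step $k \to k+2$ via Theorem~\ref{genlower} with the second index fixed at $2$. (Your intermediate bound $\frac{7(k+2)}{8}$ is the right one; the paper's displayed ``$\frac{7}{4}(k+2)$'' is a typo.)
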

\begin{proof}
We prove the result using mathematical induction.
The result holds for $i=2$ and $i=3$ since $\epsilon_2^{\mathcal{C}} \ge \frac{4}{7} \ge \frac{8}{7 \times 2}$ and $\epsilon_3^{\mathcal{C}} \ge \frac{5}{11} \ge \frac{8}{7 \times 3}$ (both lower bounds shown in~\cite{MR09}).
Assume the result holds for $i=k$
i.e., $\epsilon_k^{\mathcal{C}} \ge \frac{8}{7k}$ or $ \frac{7}{8}k \ge \frac{1}{\epsilon_k^{\mathcal{C}}}$.
Thus, $\frac{7}{8}k + \frac{7}{4} \ge \frac{1}{\epsilon_k^{\mathcal{C}}} + \frac{1}{\epsilon_2^{\mathcal{C}}}$ (since $\epsilon_2^{\mathcal{C}} \ge \frac{4}{7}$ as shown in~\cite{MR09})
i.e., $\frac{7}{4} (k+2) \ge \frac{\epsilon_k^{\mathcal{C}} + \epsilon_2^{\mathcal{C}}}{\epsilon_k^{\mathcal{C}} \epsilon_2^{\mathcal{C}}}$.
Therefore, $\epsilon_{k+2}^{\mathcal{C}} \ge \frac{\epsilon_k^{\mathcal{C}} \epsilon_2^{\mathcal{C}}}{\epsilon_k^{\mathcal{C}} + \epsilon_2^{\mathcal{C}}}  \ge \frac{8}{7(k+2)}$ (from theorem \ref{genlower}).

\end{proof}


\begin{lem}\label{disk3lower}
 $\epsilon_3^{\mathcal{D}} \ge \frac{1}{3}$
\end{lem}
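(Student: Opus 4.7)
The plan is to exhibit a concrete point configuration that defeats every weak 3-net with disks. Specifically, I would place $n$ points as three equal clusters of $n/3$ points at the vertices of a very large equilateral triangle, each cluster tightly packed relative to the triangle's side length so that any disk either misses a cluster almost entirely or contains all but at most a boundary sliver of it.

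Given any weak $\epsilon$-net $N = \{x_1,x_2,x_3\}\subset\mathbb{R}^2$, I would split into two subcases by a pigeonhole argument. In the easy subcase some cluster $X\in\{A,B,C\}$ contains no net point; then the minimum enclosing disk of $X$ has exactly $n/3$ points of $P$ and avoids $N$, witnessing $\epsilon_3^{\mathcal{D}}\ge 1/3$. In the hard subcase each cluster contains exactly one net point, say $x_A\in A$, $x_B\in B$, $x_C\in C$, and I would construct a disk $D$ whose boundary skims past $x_A$ and $x_B$ from the outside with its center pushed far away on the side opposite to $C$; such a disk contains $A\setminus\{x_A\}$ and $B\setminus\{x_B\}$ but misses $x_C$ entirely, capturing $2n/3-2\ge n/3$ points of $P$.

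The main obstacle is the hard subcase: verifying that, no matter where inside their clusters the adversary hides $x_A$ and $x_B$, one can always realize a disk whose boundary passes by both of them while its interior swallows the rest of $A\cup B$ and excludes $x_C$. Because disk boundaries are curved rather than straight (unlike halfplane boundaries), this takes genuine geometric work; tightness of the clusters relative to the separation scale guarantees that the required curvature is available, essentially reducing the local situation at each of $x_A,x_B$ to a halfplane cut.

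An alternative plan is to route through Theorem \ref{genlower} with $j=1$, $k=2$: if one separately establishes the weak lower bounds $\epsilon_1^{\mathcal{D}}\ge 2/3$ (which follows from the classical disk centerpoint construction, the same three-cluster configuration) and $\epsilon_2^{\mathcal{D}}\ge 2/3$, then
\[
\epsilon_3^{\mathcal{D}}\ge \frac{\epsilon_1^{\mathcal{D}}\,\epsilon_2^{\mathcal{D}}}{\epsilon_1^{\mathcal{D}}+\epsilon_2^{\mathcal{D}}}\ge\frac{(2/3)(2/3)}{(2/3)+(2/3)}=\frac{1}{3}.
\]
Under this route the obstacle shifts to proving $\epsilon_2^{\mathcal{D}}\ge 2/3$ for weak nets, which would itself require a careful cluster-based construction. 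I would default to the direct three-cluster argument above, since it localizes the difficulty into a single explicit geometric lemma about tangent disks through two prescribed points.
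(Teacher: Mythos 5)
Your construction does not establish the weak lower bound, and the two key steps both break down. First, the claim that tight clustering forces every disk to ``either miss a cluster almost entirely or contain all but a boundary sliver of it'' is false: a disk large enough to reach two far-apart clusters has, near each cluster, a boundary that is essentially a straight line, so it can slice off \emph{any} intermediate fraction of a tight cluster. Hence heavy disks are not just ``union of whole clusters,'' and your easy/hard dichotomy does not cover the real threats (e.g.\ a halfplane-like disk taking half of $A$ plus half of $B$). Second, and fatally, in the hard subcase you treat $x_A$ as if it were one of the points of $A$ lying on its boundary. Here $Q$ is a \emph{weak} net: the net-builder may place $x_A$ strictly inside the convex hull of cluster $A$ (it need not belong to $P$ at all), and then no disk --- being convex --- can contain all of $A\setminus\{x_A\}$, i.e.\ all of $A$, while excluding $x_A$. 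Against such deep placements a disk avoiding the net captures only a bounded fraction of each cluster, your count $2n/3-2$ is unattainable, and whether the captured fractions can still be forced to sum to $n/3$ is exactly what has to be proved and is absent. (Your case split also silently assumes the three net points distribute one per cluster; weak net points may lie outside every cluster or two in one cluster.) This is why the paper's proof uses a more structured configuration --- six groups of $n/6$ points on two concentric circles --- and a sequence of auxiliary heavy disks and halfspaces ($D_1,D_2,D_3,D_1',H_{12},H_{23}$) that successively pin down where the three weak net points must lie before exhibiting a final $\frac{n}{3}$-heavy disk avoiding them; three clusters at triangle vertices do not by themselves force that rigidity.

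Your fallback route through Theorem~\ref{genlower} with $j=1$, $k=2$ also does not close the gap: it needs $\epsilon_2^{\mathcal{D}}\ge \frac{2}{3}$ for \emph{weak} nets, which is not known and not what the literature provides --- the paper itself only invokes $\epsilon_2^{\mathcal{D}}\ge\frac{1}{2}$ (from the small weak $\epsilon$-net results) in its induction, and with that value the recursion yields only $\frac{(2/3)(1/2)}{(2/3)+(1/2)}=\frac{2}{7}<\frac{1}{3}$. So the recursive route genuinely falls short of $\frac{1}{3}$, which is precisely why a bespoke construction such as the paper's is needed.
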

\begin{proof}

\begin{figure}
\begin{centering}
\includegraphics[scale=0.25]{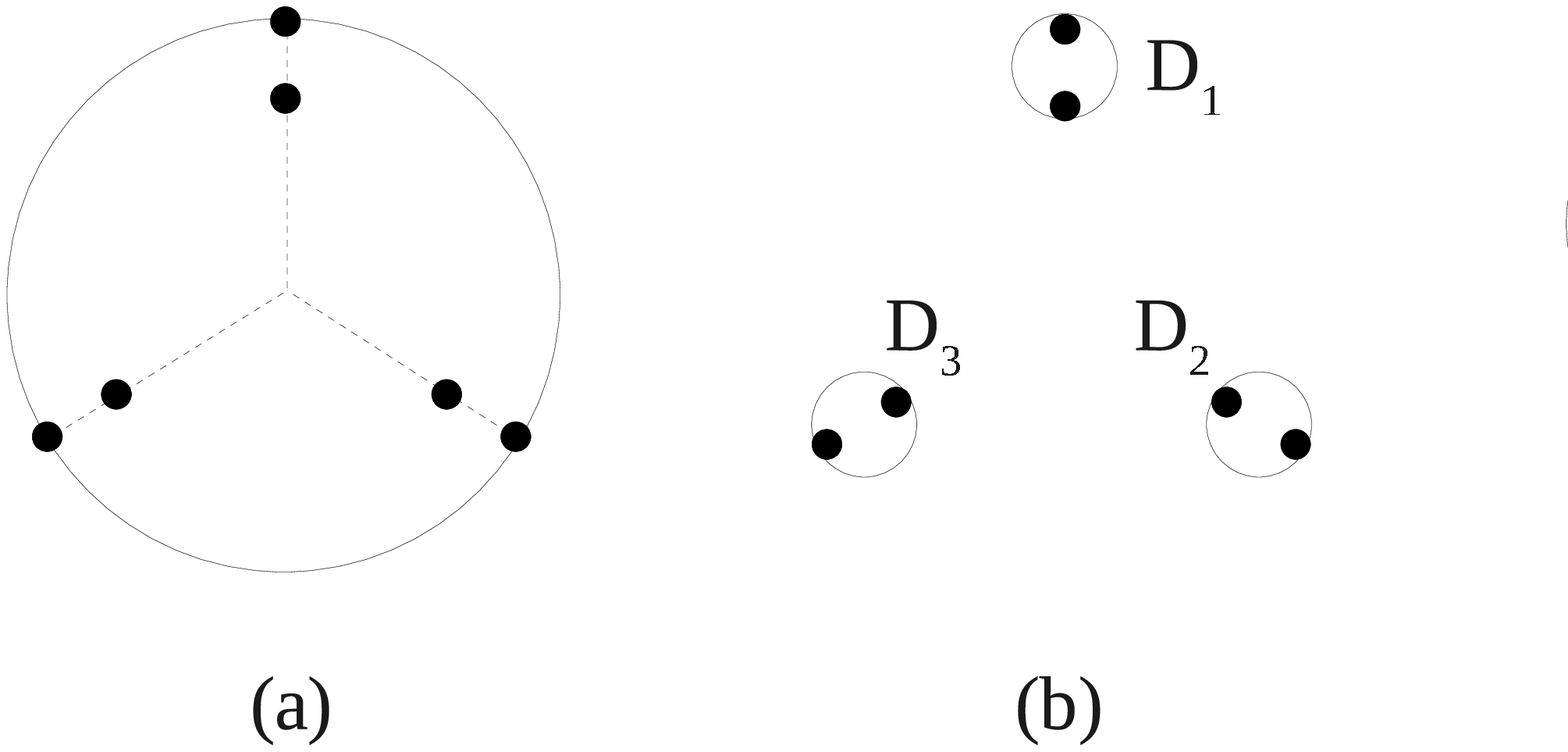}
\caption{Lower bound for $\epsilon^{\cal{D}}_3$}
\label{fig3DL}
\end{centering}
\end{figure}
 Let $P$ be a set of $n$ points divided into six subsets of $\frac{n}{6}$ points each and arranged equidistant on two concentric circles as shown in figure \ref{fig3DL}(a).Let $Q$ be a weak $\epsilon$-net of size three. We claim that there always exists a disk that avoids $Q$ and contains $\frac{n}{3}$ points from $P$.

 The three disks, $D_1$, $D_2$ and $D_3$ shown in figure \ref{fig3DL}(b) contain $\frac{n}{3}$ points each. So each of these disks must contain one point from $Q$. Disk $D'_{1}$ and halfspace $H_{12}$ as shown in figure \ref{fig3DL}(c) also contain $\frac{n}{3}$ points each. Therefore, the point in $Q$ from disk $D_1$ must also be contained in disk $D'_1$. Similarly, the point from either $D_1$ or $D_2$ must be contained in halfspace $H_{12}$. Without loss of generality, let the point of $Q$ contained in $H_{12}$ be from $D_1$. A point belonging to $D_1 \cap D'_1 \cap H_{12}$ belongs to the topmost group as can be seen in figure \ref{fig3DL}(c). Now halfspace $H_{23}$, shown in figure \ref{fig3DL}(d) must contain a point belonging to either $D_2$ or $D_3$. Let this point lie in $D_3$. Then there exists a disk that avoids $Q$ and contains $\frac{n}{3}$ points as shown in figure \ref{fig3DL}(d).
\end{proof}

\begin{lem}
 $\epsilon_i^{\mathcal{D}} \ge \frac{1}{i}$, for $i \ge 2$
\end{lem}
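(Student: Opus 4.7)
The plan is to prove $\epsilon_i^{\mathcal{D}} \ge \frac{1}{i}$ for all $i \ge 2$ by mathematical induction with step size two, mirroring exactly the argument used for $\epsilon_i^{\mathcal{C}} \ge \frac{8}{7i}$ a few lemmas earlier. The main ingredient will again be the recursive lower bound construction of Theorem~\ref{genlower}, combined with two base cases that have already been established.

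For the base cases $i=2$ and $i=3$, I would cite the lower bound $\epsilon_i^{\mathcal{D}} \ge \frac{2}{i+2}$ for even $i$ (stated at the end of Section~\ref{disk} and inherited from halfspaces), which gives $\epsilon_2^{\mathcal{D}} \ge \frac{1}{2} = \frac{1}{2}$, together with Lemma~\ref{disk3lower}, which supplies $\epsilon_3^{\mathcal{D}} \ge \frac{1}{3}$. These establish the inequality for $i=2$ and $i=3$.

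For the inductive step, assume the statement holds for some $k \ge 2$, so that $\epsilon_k^{\mathcal{D}} \ge \frac{1}{k}$, equivalently $\frac{1}{\epsilon_k^{\mathcal{D}}} \le k$. Applying Theorem~\ref{genlower} with $j=k$ and the other parameter equal to $2$, and using $\epsilon_2^{\mathcal{D}} \ge \frac{1}{2}$, I get
\[
\epsilon_{k+2}^{\mathcal{D}} \;\ge\; \frac{\epsilon_k^{\mathcal{D}} \epsilon_2^{\mathcal{D}}}{\epsilon_k^{\mathcal{D}} + \epsilon_2^{\mathcal{D}}} \;=\; \frac{1}{\frac{1}{\epsilon_k^{\mathcal{D}}} + \frac{1}{\epsilon_2^{\mathcal{D}}}} \;\ge\; \frac{1}{k+2},
\]
which is exactly the claim for $i=k+2$. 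Since both $i=2$ and $i=3$ are covered, induction in steps of two reaches every integer $i \ge 2$.

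There is no real obstacle here; all the work was done in Theorem~\ref{genlower} (which applies to weak $\epsilon$-nets for compact convex families, in particular disks) and in Lemma~\ref{disk3lower}. The only thing to be careful about is to use the halfspace-derived bound $\epsilon_2^{\mathcal{D}} \ge \frac{1}{2}$ rather than any sharper value, so that the harmonic-mean computation collapses cleanly to $\frac{1}{k+2}$.
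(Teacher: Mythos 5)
Your induction is structured exactly as the paper's proof: base cases $i=2,3$, then a step of size two via Theorem~\ref{genlower} with the harmonic-mean computation $\epsilon_{k+2}^{\mathcal{D}} \ge \frac{\epsilon_k^{\mathcal{D}}\epsilon_2^{\mathcal{D}}}{\epsilon_k^{\mathcal{D}}+\epsilon_2^{\mathcal{D}}} \ge \frac{1}{k+2}$. The $i=3$ base case (Lemma~\ref{disk3lower}) and the inductive mechanism are fine.

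The one genuine problem is how you justify $\epsilon_2^{\mathcal{D}} \ge \frac{1}{2}$, which you need both as the $i=2$ base case and as an ingredient in every inductive step. This lemma lives in the weak-$\epsilon$-net section, and the bound you cite --- $\epsilon_i^{\mathcal{D}} \ge \frac{2}{i+2}$ for even $i$ at the end of Section~\ref{disk}, inherited from the halfspace constructions of Koml\'os--Pach--Woeginger --- is a \emph{strong}-net lower bound: those constructions argue about net points chosen from $P$ (by pigeonhole on which subsets contain net points). A lower bound for strong nets does not transfer to weak nets; since a weak net may place its points anywhere in $\mathbb{R}^2$, one has $\epsilon_i^{\text{weak}} \le \epsilon_i^{\text{strong}}$, so the implication runs in the wrong direction, and a cleverly placed point outside $P$ could in principle stab all the relevant halfspaces/disks of such a construction. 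The paper avoids this by taking $\epsilon_2^{\mathcal{D}} \ge \frac{1}{2}$ for weak nets directly from \cite{AAH09} (the small \emph{weak} epsilon-nets paper). With that substitution for the $i=2$ ingredient, your argument coincides with the paper's; without it, the base case and the inductive step are unsupported as written.
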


\begin{proof}
We use mathematical induction to prove the result.
The result holds for $i=2$(by \cite{AAH09}) and $i=3$(by lemma \ref{disk3lower}).
Assume the result holds for $i=k$
i.e., $\epsilon_k^{\mathcal{D}} \ge \frac{1}{k}$ or $k \ge \frac{1}{\epsilon_k^{\mathcal{D}}}$.
Thus,  $k+2 \ge \frac{1}{\epsilon_k^{\mathcal{D}}} + \frac{1}{\epsilon_2^{\mathcal{D}}}$ (since $\epsilon_2^{\mathcal{D}} \ge \frac{1}{2}$ as shown in~\cite{AAH09}) 
i.e.,  $k+2 \ge \frac{\epsilon_k^{\mathcal{D}} + \epsilon_2^{\mathcal{D}}}{\epsilon_k^{\mathcal{D}} \epsilon_2^{\mathcal{D}}}$. Thus 
$\epsilon_{k+2}^{\mathcal{D}} \ge \frac{\epsilon_k^{\mathcal{D}}\epsilon_2^{\mathcal{D}}}{\epsilon_k^{\mathcal{D}} + \epsilon_2^{\mathcal{D}}} \ge \frac{1}{k+2}$.
\end{proof}

\begin{lem}\label{circ}
When a circle is divided into $i$ equal sectors, $i\geq4$, there exist $i$ axis-parallel rectangles such that each of them contains the corresponding arc from each sector and none of them intersect.
\end{lem}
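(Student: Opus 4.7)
The plan is to take, for each of the $i$ equal arcs, its tight axis-parallel bounding box as the required rectangle. Containment is then immediate by definition, so the real content is verifying that the $i$ bounding boxes have pairwise disjoint interiors.

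Let $A_k$ be the $k$-th arc on the unit circle, with common endpoint $p_k = (\cos\gamma_k,\sin\gamma_k)$ shared by $A_k$ and $A_{k+1}$, and let $R_k$ denote the axis-parallel bounding box of $A_k$. The key claim is that for each $k$, the axis-parallel line through $p_k$ whose direction matches the ``dominant'' coordinate of the tangent $(-\sin\gamma_k,\cos\gamma_k)$ separates $R_k$ and $R_{k+1}$ (along with $p_k$, which lies on the line itself). Concretely, when $|\cos\gamma_k|\ge|\sin\gamma_k|$ (the endpoint is near the $x$-axis and the tangent is nearly vertical), the horizontal line $y=\sin\gamma_k$ is the separator; when $|\sin\gamma_k|\ge|\cos\gamma_k|$, the vertical line $x=\cos\gamma_k$ is the separator.

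I would verify the claim by using monotonicity of $\sin$ and $\cos$ on each arc. Since $i\ge 4$ means every arc spans angle $2\pi/i\le\pi/2$, the angular range of $A_k$ lies in an interval short enough that I can pinpoint exactly which endpoints or axis-crossings realize the $x$- and $y$-extrema. A short case analysis (first-quadrant endpoint as the base case, the other seven octants by the obvious $x\leftrightarrow-x$, $y\leftrightarrow-y$, $x\leftrightarrow y$ symmetries) shows that, in the chosen separating direction, $A_k$ lies entirely on one side of the line through $p_k$ and $A_{k+1}$ lies entirely on the other side; consequently the same is true of $R_k$ and $R_{k+1}$. For arcs $A_j$ and $A_k$ that are not consecutive, they are separated by at least one full intermediate arc, so the separating line argument applied at some intermediate common endpoint already puts their bounding boxes on opposite sides.

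The main obstacle is bookkeeping: making the case analysis on $\gamma_k$ clean, and handling boundary configurations where $p_k$ sits exactly on an axis (so one bounding box degenerates to having an edge on that axis) or exactly at the octant boundary $\gamma_k=\pi/4\pmod{\pi/2}$ (where the two candidate separators both work, meeting only at $p_k$). These boundary configurations can be absorbed into the main cases by allowing the separating line to coincide with a shared edge of the two bounding boxes, since sharing an edge still gives disjoint interiors and no rectangle properly overlaps another.
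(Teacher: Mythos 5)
Your key step for \emph{consecutive} arcs is sound: for a shared endpoint $p_k=(\cos\gamma_k,\sin\gamma_k)$ with, say, $|\sin\gamma_k|\ge|\cos\gamma_k|$, monotonicity of $\cos$ on the two adjacent angular intervals (each of length $2\pi/i\le\pi/2$) does show that one arc lies in $x\ge\cos\gamma_k$ and the other in $x\le\cos\gamma_k$, hence their tight boxes have disjoint interiors; note that interior-disjointness is all one can hope for, since consecutive rectangles must both contain $p_k$ (the paper's own rectangles also share boundary segments). The gap is in the non-consecutive case. You assert that ``the separating line argument applied at some intermediate common endpoint already puts their bounding boxes on opposite sides,'' but this does not follow from what you proved: the separator at an intermediate endpoint is only guaranteed to separate the two arcs \emph{adjacent} to that endpoint, and a box two or more arcs away can straddle it. Concretely, take $i=5$ with endpoints at angles $0,2\pi/5,4\pi/5,6\pi/5,8\pi/5$, and the non-consecutive arcs $A=[0,2\pi/5]$ and $B=[6\pi/5,8\pi/5]$. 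At the intermediate endpoint $4\pi/5$ your prescription gives the horizontal separator $y=\sin(4\pi/5)\approx 0.588$; the box of $B$ lies below it, but the box of $A$ has $y$-range $[0,\sin(2\pi/5)]\approx[0,0.951]$ and crosses it. So the existence of a \emph{suitable} intermediate endpoint (here $2\pi/5$ or $6\pi/5$ would do) is a genuine claim that needs its own proof, not a corollary of the consecutive-pair analysis.

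The statement you need is true and can be patched, e.g.\ by a direct projection argument: if two of the tight boxes had overlapping interiors, both the $x$- and $y$-projections of the two arcs would overlap in positive length, which forces one arc to meet, in positive length, both the reflection of the other across the $x$-axis and its reflection across the $y$-axis; these two reflected arcs are antipodal, separated by gaps of $\pi-2\pi/i\ge\pi/2\ge 2\pi/i$, so an arc of length $2\pi/i$ cannot meet both — a contradiction. (Alternatively, argue which intermediate endpoint to pick, e.g.\ one adjacent to where the chain of arcs between $A_j$ and $A_k$ crosses the relevant coordinate axis.) For comparison, the paper avoids the pairwise analysis altogether: for $i=4$ it writes down four explicit rectangles anchored at the division points and extending to the circumscribing square, and for $i>4$ it places any two consecutive arcs into two different quadrants of a suitably rotated $4$-division and nests their (bounding) rectangles inside the two disjoint big rectangles. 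Your tight-bounding-box route is a legitimately different and arguably more uniform construction, but as written the non-consecutive case is unproven.
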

\begin{proof}

We will first prove for $i=4$.

Assume that the centre of the circle is at the origin and the radius of the circle is $r$. Then the two lines that divide the circle into four equal sectors will pass through (0,0) and will intersect the circle in four different points. Let these points of intersection be $(a,b),(c,-d), (-a,-b),(-c,d) $ where $a,b,c,d \geq 0$.
\begin{figure}
\begin{centering}
\includegraphics[height=2in]{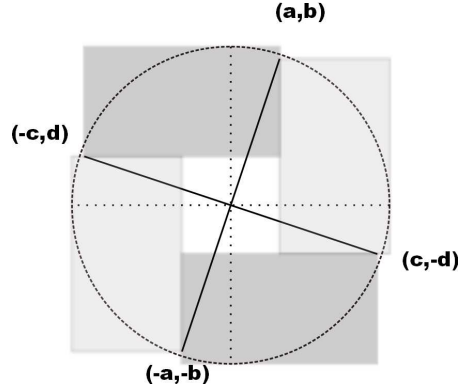}
\caption{To prove Lemma \ref{circ}}
\label{figcir}
\end{centering}

\end{figure}

Without loss of generality, assume that $d\leq b$. Now the four rectangles with the following set of vertices will contain the four arcs (See Figure \ref{figcir}).
\begin{itemize}
\item$(-c,d),(-c,r),(a,r),(a,d)$
\item$(a,b),(r,b),(r,-d),(a,-d)$
\item$(c,-d),(c,r),(-a,r),(-a,-d)$
\item$(-a,-b),(r,-b),(r,d),(-a,d)$
\end{itemize}
Clearly none of these rectangles intersect. 

Now assume $i > 4$. When a circle is divided into $i$ sectors, two rectangles that contain two consecutive sectors are contained in two separate bigger rectangles as seen in the previous paragraph. Since the bigger rectangles are already proved to be non-intersecting the smaller rectangles that are contained in them also do not intersect.
\end{proof}
\begin{thm}
$\epsilon_{i}^{\mathcal{R}} \geq {1\over i}$, for $i\geq 4$
\end{thm}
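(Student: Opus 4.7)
My plan is to exhibit a concrete point set and combine Lemma~\ref{circ} with a pigeonhole argument. I would divide a circle into $i$ equal sectors, let $A_1,\ldots,A_i$ be the resulting arcs, and place $n/i$ points uniformly on each $A_j$. Lemma~\ref{circ} then furnishes $i$ pairwise non-intersecting axis-parallel rectangles $R_1,\ldots,R_i$ with $R_j\supseteq A_j$; taking each $R_j$ to be the minimal axis-parallel bounding box of $A_j$, the set $R_j\cap P$ is exactly the $n/i$ points placed on $A_j$. To prove $\epsilon_i^{\mathcal{R}}\ge 1/i$, I need to show that for every weak $\epsilon$-net $Q$ of size $i$, some axis-parallel rectangle disjoint from $Q$ contains at least $n/i$ points of $P$.

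The easy case is when some $R_j$ is disjoint from $Q$, in which case $R_j$ itself is the desired rectangle. Otherwise, since the $R_j$ are pairwise disjoint and $|Q|=i$, each $R_j$ contains exactly one point $q_j\in Q$. For this harder case I plan to construct $i$ candidate corner rectangles $S_1,\ldots,S_i$, one for each meeting point $P_j$ of consecutive arcs $A_j$ and $A_{j+1}$, satisfying (i) $S_j\cap Q=\emptyset$ and (ii) $\sum_{j=1}^{i} |S_j\cap P| = n$. Pigeonhole then yields some $S_j$ with $|S_j\cap P|\ge n/i$, which is the required rectangle.

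Each $S_j$ is defined as the axis-parallel rectangle anchored at $P_j$ and bounded on each side by an axis-parallel line through $q_j$ or $q_{j+1}$, with $S_j$ chosen to lie on the side of each cut that contains $P_j$; this immediately gives $q_j,q_{j+1}\notin S_j$. For property (ii), I would split each arc $A_j$ by the axis-parallel line through $q_j$ perpendicular to the longer side of $R_j$: one half of this split is assigned to the corner rectangle $S_{j-1}$ (facing meeting point $P_{j-1}$) and the other half to $S_j$ (facing $P_j$). Because the split direction is dictated by the orientation of $R_j$ within the cyclic layout, the two halves cleanly partition the $n/i$ points of $A_j$ between $S_{j-1}$ and $S_j$, yielding $\sum_j |S_j\cap P|=n$. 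The case analysis for $i=4$ is fully worked out by the symmetric arrangement of the four rectangles $R_N,R_E,R_S,R_W$; the general case is the natural cyclic extension.

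The main obstacle is the geometric verification in the harder case that each $S_j$ is indeed disjoint from every non-adjacent net point $q_k$ with $k\neq j,j+1$. This relies in an essential way on the non-intersecting cyclic arrangement of the $R_k$ from Lemma~\ref{circ}: each non-adjacent $R_k$ lies strictly beyond one of the axis-parallel lines bounding $S_j$, so $q_k\in R_k$ is automatically outside $S_j$. Once this disjointness is checked, the clean partition of arc points and a single pigeonhole step concludes the argument.
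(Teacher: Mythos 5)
Your overall plan (fixed blocks plus corner rectangles plus pigeonhole) differs from the paper's argument, and the place where you yourself locate ``the main obstacle'' is a genuine gap, not a routine verification. In the harder case you need two things simultaneously: (i) every point of each arc $A_j$ on the $P_j$-side of the single cut through $q_j$ really lies in the corner rectangle $S_j$, and (ii) $S_j$ avoids all non-adjacent $q_k$. Neither is ``automatic'' from the disjointness of the $R_k$. The exclusion of a non-adjacent $q_k$ is in general \emph{not} achieved by the two cuts through $q_j,q_{j+1}$; it has to come from the outer sides of $S_j$, and the placement of those outer sides is constrained in the opposite direction by requirement (i), since the assigned half-arc can reach far in the coordinate that the outer side truncates. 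Concretely, if for $i=4$ you take the quadrant-aligned sectors (boundaries at $0^\circ,90^\circ,180^\circ,270^\circ$, square bounding boxes, so your ``perpendicular to the longer side'' rule is ambiguous), then at the corner near $(r,0)$ either the two cuts are parallel vertical lines on the same side of the anchor, or $S_j$ is a quadrant $\{x>x(q_1),\,y>y(q_4)\}$ which misses those points of $A_4$ above $q_4$ whose $x$-coordinate is smaller than $x(q_1)$; in both resolutions some points of $P$ lie in no $S_j$, the identity $\sum_j|S_j\cap P|=n$ fails, and the pigeonhole step collapses. So the scheme works only for carefully chosen sector orientations and cut directions (your diagonal $i=4$ picture is such a choice), and the claim that ``the general case is the natural cyclic extension'' is exactly the case analysis that is missing: you must fix an orientation in which each of the four extreme points of the circle is interior to an arc, choose the cut direction of each arc along its monotone coordinate, and then verify the containment and exclusion inequalities relating box edges of non-adjacent arcs to the assigned half-arcs. (Smaller loose ends: the sum is at most $n-i$ rather than $n$ because net points and points on cut lines are lost, which is harmless asymptotically; and with interior-disjoint boxes a weak net point on a shared boundary can stab two boxes, breaking the ``exactly one $q_j$ per $R_j$'' dichotomy unless you perturb or shrink.)

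For contrast, the paper avoids all of this with a counting argument on many overlapping rectangles rather than $i$ disjoint ones: it places $n=ik+1$ points on the circle, takes all $n$ windows of $k$ consecutive points, and uses Lemma~\ref{circ} to enclose each window in a rectangle containing only that window's points, with rectangles of windows that are $k$ apart disjoint; then any single net point stabs at most $k$ of the $n$ rectangles, so $i$ points stab at most $ik<n$ of them and some window with $k\approx n/i$ points is missed. If you want to salvage your route, you must supply the orientation choice, the cut-direction rule, and the containment/exclusion verification for general $i$; otherwise the sliding-window counting argument is the shorter path.
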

\begin{proof}
Let $P={p_1, p_2,...., p_n}$ be a point set of $n$ points arranged along the boundary of a circle, where $n=ik+1$ for some arbitrary $k$. Consider groups of points, $S_i$, $1\leq i \leq n$, of size $k$ in which each group consists of $k$ consecutive points starting with point $p_i$. Now, by Lemma~\ref{circ}, there exists a family of axis-parallel rectangles  $R=\{R_{j},1\leq j \leq n\}$ such that each $R_j$ contains all the $k$ points of $S_j$ and $R_{j}$ and $R_{j+k}$ do not have a common intersection. Therefore any point $p$ in the epsilon net can cover at most $k$ rectangles in $R$, so $i$ points can cover at most $ik$ rectangles. Thus, for any $\epsilon$-net $Q$ of size $i$, there exists an axis-parallel rectangle in $R$ that avoids $Q$. Thus for large $k$, $\epsilon_{i}^{\mathcal{R}} \geq {1\over i}$.
\end{proof}

\section*{Conclusions and Open Questions}

In this paper, we have shown lower and upper bounds for $\epsilon_i^{\mathcal S}$ where $\mathcal{S}$ is the family of axis-parallel rectangles, halfspace and disks. A summary of the bounds for $i \leq 3$ are given in Table 2. An interesting open question is to find the exact value of $\epsilon_i^{\mathcal{S}}$ for small values of $i$. 
\noindent
\begin{table}[t] \label{sumtable}
\begin{centering}

\begin{tabular}{|c|c|c|c|c|c|c|}
\hline
&\multicolumn{2}{|c|}{Rectangles}&\multicolumn{2}{|c|}{Half Spaces}&\multicolumn{2}{|c|}{Disks}\\
\hline
&LB&UB&LB&UB&LB&UB\\
\hline
$\epsilon_1$&\multicolumn{2}{|c|}{$3/4$}&\multicolumn{2}{|c|}{1}&\multicolumn{2}{|c|}{1}\\
\hline
$\epsilon_2$&$5/9$&$5/8$&$3/5$&$2/3$&$3/5$&$2/3$\\
\hline
$\epsilon_3$&$2/5$&$9/16$&\multicolumn{2}{|c|}{$1/2$}&$1/2$&$2/3$\\
\hline
\end{tabular}
\caption{Summary of lower and upper bounds for $\epsilon_{i}$.
}
\end{centering}
\end{table}
\section*{Acknowledgements}
\noindent We would like to thank Janardhan Kulkarni for helpful discussions.


\begin{thebibliography}{10}

\bibitem{Alo12}
N.~Alon.
\newblock A non-linear lower bound for planar epsilon-nets.
\newblock {\em Discrete \& Computational Geometry}, 47(2):235--244, 2012.

\bibitem{AAH09}
B.~Aronov, F.~Aurenhammer, F.~Hurtado, S.~Langerman, D.~Rappaport, C.~Seara,
  and S.~Smorodinsky.
\newblock Small weak epsilon-nets.
\newblock {\em Computational Geometry}, 42(5):455--462, 2009.

\bibitem{AES10}
B.~Aronov, E.~Ezra, and M.~Sharir.
\newblock Small-size epsilon-nets for axis-parallel rectangles and boxes.
\newblock {\em SIAM Journal on Computing}, 39(7):3248--3282, 2010.

\bibitem{AGK10}
P.~Ashok, S.~Govindarajan, and J.~Kulkarni.
\newblock Small strong epsilon nets.
\newblock In {\em Proceedings of the 22nd Canadian Conference on Computational
  Geometry (CCCG2010)}, pages 155--158, 2010.

\bibitem{BZ06}
M.~Babazadeh and H.~Zarrabi-Zadeh.
\newblock Small weak epsilon-nets in three dimensions.
\newblock In {\em Proceedings of the 18th Canadian Conference on Computational
  Geometry}, pages 47--50, 2006.

\bibitem{BMN11}
B.~Bukh, J.~Matou{\v{s}}ek, and G.~Nivasch.
\newblock Lower bounds for weak epsilon-nets and stair-convexity.
\newblock {\em Israel Journal of Mathematics}, 182(1):199--228, 2011.

\bibitem{CEG95}
B.~Chazelle, H.~Edelsbrunner, M.~Grigni, L.~Guibas, M.~Sharir, and E.~Welzl.
\newblock Improved bounds on weak $\varepsilon$-nets for convex sets.
\newblock {\em Discrete \& Computational Geometry}, 13(1):1--15, 1995.

\bibitem{CEM93}
K.~Clarkson, D.~Eppstein, G.~Miller, C.~Sturtivant, and S.~Teng.
\newblock Approximating center points with iterated radon points.
\newblock In {\em Proceedings of the ninth annual symposium on Computational
  geometry}, pages 91--98, 1993.

\bibitem{CV07}
K.~Clarkson and K.~Varadarajan.
\newblock Improved approximation algorithms for geometric set cover.
\newblock {\em Discrete \& Computational Geometry}, 37(1):43--58, 2007.

\bibitem{Dul06}
M.~Dulieu.
\newblock $\epsilon $-nets faibles.
\newblock Master's thesis, M\`{e}moire de Licence, Universit\`{e} Libre de
  Bruxelles,, 2006.

\bibitem{HW87}
D.~Haussler and E.~Welzl.
\newblock $\epsilon$-nets and simplex range queries.
\newblock {\em Discrete \& Computational Geometry}, 2(1):127--151, 1987.

\bibitem{JM94}
S.~Jadhav and A.~Mukhopadhyay.
\newblock Computing a centerpoint of a finite planar set of points in linear
  time.
\newblock {\em Discrete \& Computational Geometry}, 12(1):291--312, 1994.

\bibitem{KPW92}
J.~Koml{\'o}s, J.~Pach, and G.~Woeginger.
\newblock Almost tight bounds for $\epsilon$-nets.
\newblock {\em Discrete \& Computational Geometry}, 7(1):163--173, 1992.

\bibitem{MSW90}
J.~Matou{\v{s}}ek, R.~Seidel, and E.~Welzl.
\newblock How to net a lot with little: small $epsilon$-nets for disks and
  halfspaces.
\newblock In {\em Proceedings of the sixth annual symposium on Computational
  geometry}, pages 16--22, 1990.

\bibitem{Mat04}
J.~Matousek and U.~Wagner.
\newblock New constructions of weak $\epsilon$-nets.
\newblock {\em Discrete \& Computational Geometry}, 32(2):195--206, 2004.

\bibitem{MS10}
G.~Miller and D.~Sheehy.
\newblock Approximate centerpoints with proofs.
\newblock {\em Computational Geometry}, 43(8):647--654, 2010.

\bibitem{MTT93}
G.~Miller, S.~Teng, W.~Thurston, and S.~Vavasis.
\newblock Automatic mesh partitioning.
\newblock {\em Institute for Mathematics and Its Applications}, 56:57, 1993.

\bibitem{MTT97}
G.~Miller, S.~Teng, W.~Thurston, and S.~Vavasis.
\newblock Separators for sphere-packings and nearest neighbor graphs.
\newblock {\em Journal of the ACM (JACM)}, 44(1):1--29, 1997.

\bibitem{MPS09}
G.~L. Miller, T.~Phillips, and D.~R. Sheehy.
\newblock The centervertex theorem for wedge depth.
\newblock In {\em Proceedings of the 21st Canadian Conference on Computational
  Geometry (CCCG2009)}, pages 79--82, 2009.

\bibitem{MR08}
N.~Mustafa and S.~Ray.
\newblock Weak $\epsilon$-nets have basis of size $o (\frac{1}{\epsilon}log
  \frac{1}{\epsilon})$ in any dimension.
\newblock {\em Computational Geometry}, 40(1):84--91, 2008.

\bibitem{MR09}
N.~Mustafa and S.~Ray.
\newblock An optimal extension of the centerpoint theorem.
\newblock {\em Computational Geometry}, 42(6-7):505--510, 2009.

\bibitem{PT11}
J.~Pach and G.~Tardos.
\newblock Tight lower bounds for the size of epsilon-nets.
\newblock In {\em Proceedings of the 27th annual ACM symposium on Computational
  geometry}, pages 458--463, 2011.

\bibitem{PR08}
E.~Pyrga and S.~Ray.
\newblock New existence proofs for $\varepsilon$-nets.
\newblock In {\em Proceedings of the twenty-fourth annual symposium on
  Computational geometry}, pages 199--207, 2008.

\bibitem{Rad46}
R.~Rado.
\newblock A theorem on general measure.
\newblock {\em Journal of the London Mathematical Society}, 1(4):291--300,
  1946.

\bibitem{Var09}
K.~Varadarajan.
\newblock Epsilon nets and union complexity.
\newblock In {\em Proceedings of the 25th annual symposium on Computational
  geometry}, pages 11--16, 2009.

\bibitem{Yao83}
F.~Yao.
\newblock A 3-space partition and its applications.
\newblock In {\em Proceedings of the fifteenth annual ACM symposium on Theory
  of computing}, pages 258--263, 1983.

\end{thebibliography}
\end{document}